\theoremstyle{plain}
\newtheorem{Theorem}{Theorem}[section]
\newtheorem{Lemma}[Theorem]{Lemma}
\newtheorem{Definition}[Theorem]{Definition}
\newtheorem{Proposition}[Theorem]{Proposition}
\newtheorem{Remark}{Remark}[section]
\newtheorem{Assumption}[Remark]{Assumption}
\numberwithin{equation}{section}
\def\ind{{\bf 1}}
\def\demi{\frac{1}{2}}
\def\cal{\mathcal}
\def\B{{\mathcal B}}
\def\C{{\mathcal C}}
\def\F{{\mathcal F}}
\def\sigR{{\cal R}}
\def\sigoR{{\cal R}^{\perp}}
\def\sR{\text{{\tiny${\cal R}$}}}
\def\X{{\cal X}}
\def\Y{{\cal Y}}
\def\bfU{{\bf U}}
\def\E{{\mathbb E}}
\def\Q{{\mathbb Q}}
\def\R{{\mathbb R}}
\def\R{{\mathbb R}}
\def\tU{{\widetilde U}}
\def\tu{{\tilde u}}
\def\tv{{\tilde v}}
\def\wL^*{{\widehat L^{(\mu^*, \sigma^*)}}}
\def\wL{{\widehat L}}
\def\tV{{\widetilde V}}
\def\GX{{\mathscr X}}
\def\GY{{\mathscr Y}}
\newcommand{\rmi}{{\rm (i) $\>\>$}}
\newcommand{\rmii}{{\rm (ii) $\hspace{1.5mm}$}}
\newcommand{\rmiii}{{\rm (iii)$\>\>$}}
\newcommand{\rmiv}{{\rm (iv)$\>\>$}}
\newcommand{\rma}{{\rm a)$\>\>$}}
\newcommand{\rmb}{{\rm b)$\>\>$}}
\newcommand{\rmc}{{\rm c)$\>\>$}}
\def\B{\Big}
\def\b{\big}
\def\bit{\begin{itemize}}
\def\eit{\end{itemize}}
\def\bc{\begin{center}}
\def\ec{\end{center}}
\def\super { \end{document}}
\def\bcom{\begin{comment}}
\def\ecom{\end{comment}}
\def\edoc{\end{document}}
\DeclareMathOperator{\esssup}{ess\,sup}
\title{Ramsey Rule with Progressive  utility\\
 and Long Term Affine Yields Curves}
\author{ El Karoui Nicole,
\thanks{ \small  LPMA, UMR CNRS  6632,  Universit\'e Pierre et Marie Curie, CMAP, UMR CNRS 7641, \'Ecole Polytechnique  }
\\
\and Mrad~Mohamed~ \thanks
{\small   LAGA, UMR CNRS 7539,  Universit\'e Paris 13}
\and Hillairet Caroline~ \thanks
{\small CMAP, UMR CNRS 7641, \'Ecole Polytechnique,}
\thanks{With the financial
support of the "Chaire Risque Financier of the  Fondation du Risque", } 
}
\date{February 21, 2014}
\begin{document}
 \maketitle
 %$$$$$$$$$$$$$$$$$$$$$$$$$$$$$$$$
 \abstract{The  purpose of this paper relies on  the study of long term affine yield curves modeling.  It is inspired by the
Ramsey rule of the economic literature,  that links discount rate and marginal utility of aggregate optimal consumption. For
such a  long maturity modelization, the possibility of adjusting
preferences to new economic information is crucial, justifying the use of progressive utility. This paper studies, in a framework with affine factors, the yield curve given from the Ramsey rule. 
It first characterizes consistent progressive utility of investment and consumption, given the optimal wealth and consumption processes. A special attention is paid to utilities associated with linear optimal processes with respect to their initial conditions, which  is for example the case of power progressive utilities. Those utilities are the basis point to construct other progressive utilities generating non linear optimal processes but leading yet to still tractable computations.  This is of particular interest to study the impact of initial wealth on yield curves.  }\\

 {\bf Keywords:} Progressive utility with consumption, market consistency,  portfolio optimization, Ramsey rule, affine yields curves.

\section*{Introduction}

This paper focuses on the modelization of long term affine yield curves. For the financing of ecological project, for the pricing of longevity-linked securities or any other investment with long term impact, modeling long term interest rates is crucial. The answer cannot be find in financial market since  for longer maturities, the bond market is highly illiquid and standard financial interest rates models cannot be easily extended. Nevertheless, an abundant literature
on the economic aspects of long-term policy-making has been developed.  
The Ramsey rule, introduced by Ramsey in his seminal work \cite{Ramsey} and further discussed by numerous economists such as Gollier \cite{Gollier3,Gollier6,Gollier9,Gollier13,Gollier14,Gollier15,Gollier16,GollierEcological} and Weitzman \cite{Weitzman,
Weitzman_review}, is the reference equation to compute discount rate, that allows to evaluate the future value of an investment by giving a current equivalent value. The Ramsey rule links the discount rate with the marginal utility of aggregate consumption at the economic equilibrium. Even if this rule is very simple, there is no consensus among economists about the parameters that should be considered, leading to very different discount rates.  But economists agree on the necessity of a sequential decision scheme that allows to revise the first decisions in the light of new  knowledge  and  direct experiences:  the utility criterion must be adaptative and adjusted to the information flow. In the classical optimization point of view, this adaptative criteria is called  consistency. 
In that sense, market-consistent progressive utilities,  studied in El Karoui and Mrad
\cite{MrNek01,MrNek02,MrNek03},  are the appropriate tools to study long term yield curves.

 Indeed, in  a dynamic and stochastic environment, the classical notion of utility function  is not
flexible enough  to help us to make good choices in the long run. M. Musiela and T. Zariphopoulou (2003-2008
\cite{zar-03,zar-07a,zar-08,zar-07})  were
the first to suggest to use instead of the classical criterion  the concept of
 progressive dynamic utility, consistent with respect to a given investment universe in a sense specified in Section 1.
The concept of progressive utility  gives an adaptative way to  model possible changes over the time of individual preferences of an agent. 
In continuation of the recent works of El Karoui and Mrad \cite{MrNek01, MrNek02, MrNek03}, and motived by the Ramsey rule (in which the consumption rate is a key process), \cite{MrNek04} extends the notion of market-consistent progressive  utility to the case with consumption: the agent invest in a financial market and consumes a part of her wealth at each instant. As an example, backward classical value function is a progressive utility, the way the classical optimization problem is
posed is very different from the progressive utility problem. In the classical approach,  the optimal  processes are   computed through a backward analysis, 
emphasizing their dependency to the horizon  of the optimization problem, while the forward point of view makes clear the monotony of the optimal processes to their initial conditions. A special attention is paid to progressive utilities generating  linear optimal processes with respect to their initial conditions, which  is for example the case of power progressive utilities.

As the zero-coupon bond market is highly illiquid for long maturity,  it is relevant, for small trades,  to  give utility indifference price (also called Davis price) for zero coupon, using   progressive utility with consumption.  We study then the dynamics of the marginal utility yield curve, in the framework of
progressive and backward power utilities (since power utilities are the most commonly used in the economic literature)
and in a model with affine factors, since this model has the advantage to lead to tractable computations while allowing for more stochasticity than the log normal model studied in \cite{MrNek04}. Nevertheless, 
using power utilities implies that  the impact of the  initial economic wealth is avoided, since  in this case the optimal processes are linear with respect to the initial conditions. We thus propose a way of constructing, from power utilities,  progressive utilities generating non linear optimal processes but leading yet to still tractable computations.   The impact of the initial wealth   for yield curves is discussed.

The paper is organized as follows. After introducing the investment universe, 
Section 1 characterizes consistent progressive utility of investment and consumption, given the optimal wealth and consumption processes. Section 2 deals  with the computation of the marginal utility yield curve, inspired by the Ramsey rule.
Section 3 focuses on the yield curve with affine factors, in such a setting the yield curve does not depend on the initial
wealth of the economy. Section 4 provides then a modelization for yield curves dynamics that are non-linear to initial
conditions.

\section{ Progressive Utility and Investment Universe}

\subsection{The investment universe}\label{descmarket}
We consider an incomplete It\^o market, equipped with a $n$-standard Brownian motion,
$W$  and characterized by an adapted short rate $(r_t)$ and an adapted
$n$-dimensional risk premium vector 
$(\eta_t)$. All these processes are defined on a filtered probability space $(\Omega,{\cal F}_{t},\mathbb{P})$
satisfying usual assumptions; they are progressively processes satisfying minimal integrability assumptions, as $\int_0^T(r_t+\|\eta_t\|^2 dt)<\infty, a.s.$.\\
The agent  may invest in this  financial market and   is allowed to consume a part of his wealth at
the rate $C_t\geq 0$.
 %assumed to be an adapted and positive process. 
 To be short, we  give the mathematical definition of  the
class of admissible strategies $(\kappa_t, C_t)$, without specifying the risky assets. Nevertheless, the incompleteness of
the market is expressed by restrictions on the risky strategies constrained to live in a given progressive vector space
$(\sigR_t)$,  often obtained as the range of some progressive linear operator $\Im_t$.

\begin{Definition}[Test processes] \label{TestP} \rmi The self-financing dynamics of a  wealth process with risky portfolio $\kappa$ and consumption rate $C$  is given by
\begin{equation}\label{eq:DynamX}
dX^{\kappa,C}_t= X^{\kappa,C}_t[r_tdt +\kappa_t(dW_t+\eta_tdt)] -C_t\,dt, ~ \quad
\kappa_t \in \sigR_t.
\end{equation}
where $C$ is a positive progressive process,  $\kappa$ is a  progressive $n$-dimensional vector in $\sigR_t$, such that $\int_0^TC_t+\|\kappa_t\|^2 dt<\infty, a.s.$.\\ %measuring the volatility vector of the wealth $X^{\kappa,c}$.\\
\rmii A strategy $(\kappa_t,C_t)$ is said to be admissible if it is stopped with  the bankruptcy of the investor (when the wealth process
reaches $0$). \\%and if  the  portfolio $\kappa$ satisfies the following  restriction: we assume there exists progressive family of vector spaces $(\sigR_t)$  such that for any $t$, 
\rmiii The set of the wealth processes with admissible $(\kappa_t,C_t)$, also called test processes, is denoted by $\GX^c$. {When  portfolios are starting from $x$ at time $t$, we use the notation $\GX^c_t(x).$}
 \end{Definition}

\noindent The following short notations will be used extensively. 
Let $\sigR$ be a vector subspace of  $\R^n$. For any $x\in \R^n$,
$x^\sR$ is the orthogonal projection of the vector $x$
onto $\sigR$ and $x^{\perp}$ is the orthogonal projection
onto $\sigoR$. \\
The existence of  a risk premium  $\eta$ is a possible
formulation of the absence of arbitrage opportunity. From
Equation \eqref{eq:DynamX}, the minimal state price process $Y^0_t$, whose the dynamics is $dY^0_t=Y^0_t[-r_tdt+ (\nu_t-\eta^\sR_t).dW_t]$, belongs to the convex family $\GY$ of  positive It\^o's processes $Y_t$ such that $(Y_t X_t^{\kappa,C} + \int_0^t Y_s C_s ds) $ is a local martingale for any admissible portfolio. The existence of equivalent martingale measure is obtained by the assumption that the exponential local martingale
$L^{\eta^\sR}_t=\exp(-\int_0^t \eta^\sR_s.dW_s-\frac{1}{2} \int_0^t
||\eta^\sR_s||^2\,ds)$ is a uniformly integrable martingale. Nevertheless, we are interested into the class of
the so-called state price processes $Y_t$  belonging to the family $\GY$ characterized below.
%%�������������������������������
\begin{comment}
Since from
\eqref{eq:DynamX}, the impact of the risk premium on the wealth dynamics only
appears through the term $\kappa_t. \eta_t $ for  $\kappa_t\in \sigR_t$, there is
a "minimal" risk premium $(\eta^\sR_t)$, the projection of $\eta_t$  on
the space $\sigR_t$ $(\kappa_t. \eta_t =\kappa_t. \eta^\sR_t )$,
to which we refer in the sequel. The minimal state price density $Y^0_t$ whose the dynamics is $dY^\nu_t=Y^\nu_t[-r_tdt+ (\nu_t-\eta^\sR_t).dW_t]$ belongs to the convex family $\GY$ of  positive processes $Y_y$ such that $(Y_t X_.^{\kappa,c} + \int_0^t Y^_s c_s ds) $ is a local martingale.

Moreover, the existence of $\eta^\sR$ is not
enough to insure the existence of equivalent martingale measure, since in
general we do not know if the exponential local martingale
$L^{\eta^\sR}_t=\exp(-\int_0^t \eta^\sR_s.dW_s-\frac{1}{2} \int_0^t
||\eta^\sR_s||^2\,ds)$ is a uniformly integrable martingale, density of an
equivalent martingale measure. 
(taking into account the
discount factor) who will play the same role for the  conjugate utility,
than the test processes $X^{\kappa,c}$ for the  utility.
%%
\end{comment}
%%�������������������������������

\begin{Definition}[State price process] \label{SPDP} \rmi An It\^o semimartingale
$Y_t$ is called a state price process in $\GY$ if  for any test process
$X^{\kappa,C},~~\kappa \in \sigR$, \\
\centerline{$(Y_t X_t^{\kappa,C} + \int_0^t Y_s C_s ds) $ is a local martingale. }\\[2mm]
\rmii  This property is equivalent to the existence of progressive process $\nu_t\in \sigoR_t, (\int_0^T\|\nu_t\|^2 dt<\infty, a.s.)$ 
such that $Y=Y^\nu$ where $Y^\nu$ is the product of $Y^0$ $(\nu=0)$ by the exponential local martingale
$L^{\nu}_t=\exp\big(\int_0^t \nu_s.dW_s-1/2\int_0^t ||\nu_s||^2ds\big)$, and satisfies
\begin{equation}\label{Ynu}
dY^\nu_t=Y^\nu_t[-r_tdt+ (\nu_t-\eta^\sR_t).dW_t],\quad \nu_t\in \sigoR_t
 \quad Y^\nu_0=y
\end{equation}
\end{Definition}
\noindent From now on, to stress out the dependency on the initial condition, the solution of \eqref{Ynu} with initial condition $y$ will be denoted $(Y^\nu_t(y))$ and $Y^\nu_t:=Y^\nu_t(1) $;  the solution of \eqref{eq:DynamX} with initial condition $x$ will be denoted $(X^{\kappa,C}_t(x))$ and $X^{\kappa,C}_t:=X^{\kappa,C}_t(1)$.
%=======================================================

\subsection{$\GX^c$-consistent Utility and Portfolio optimization with consumption}
In long term (wealth-consumption) optimization problems, it is useful to have the choice to adapt utility criteria to  deep macro-evolution of economic environment. The concept  of progressive utility is introduced in this sense. As we are interested  in optimizing both the terminal wealth and the consumption rate, we introduce two progressive utilities $\bf(U,V)$, $\bf U$ for  the terminal wealth and $\bf V$ for   the consumption rate, often called utility system. 
%thses notionThe utility $\bf U$ from  the terminal wealth  is assumed to be It\^o's progressive utility; while  the utility  $\bf V$ from  the consumption rate is just assumed to be a progressive utility. (remark that from a dynamic point of view,
% $\bf U$ and $\bf V$  will play different roles). 
For sake of completeness, we start refer the reader to \cite{MrNek01} for a detailled study.

\begin{Definition}[{Progressive Utility}]\label{defPUF}$ $ \\
 \rmi A {\em progressive utility} is a ${\mathcal C}^2$- progressive random field on $\R^*_+$,
$\bfU=\{U(t, x); t \geq 0, x > 0\}$, starting from the deterministic utility function $u$ at time $0$, such that for every $(t,\omega)$, $x\mapsto U(\omega, t, x)$
is a strictly concave, strictly increasing, and 
non negative utility function, and satisfying thee Inada conditions:

$-$ for every $(t,\omega)$, $U(t,\omega,x)$ goes to $0$ when $x$ goes to $0$ 

 $-$  the derivative $U_x(t,\omega,x)$ (also called marginal utility) goes to $\infty$ when $x$ goes to $0$, 

 $-$  the derivative $U_x(t,\omega,x)$ goes to $0$ when $x$ goes to $\infty$.

\end{Definition}
\noindent For $t=0$, the deterministic utilities $U(0,.)$ and $V(0,.)$ are denoted $u(.)$ and $v(.)$ and in the following small letters $u$ and $v$ design deterministic utilities while capital letters refer to progressive utilities.  

As in statistical learning, the utility criterium is dynamically
adjusted to be the best given the market past information. So, market inputs may be
viewed as a calibration universe through the  test-class $\GX^c$ of processes on which the utility is
chosen to provide the best satisfaction. This motivates the following  definition of $\GX^c$-consistent utility system.
\begin{Definition}\label{def:conso}
A $\GX^c$-consistent progressive  utility  system of investment and consumption
is a pair of progressive utilities $\bf U$ and $\bf V$ on $\Omega\times[0,+\infty)\times\R^+$ such that,\\
%with the following additional properties:\\
\rmi {\sc Consistency with the test-class:}  For
any admissible wealth process  $X^{\kappa,C}\in \GX^c$, 
\begin{equation*}
 \mathbb{E}\big(U(t,X^{\kappa,C}_t)+\int_s^tV(s,C_s)ds /{\cal F}_s\big)\leq
U(s,X^{\kappa,C}_s), \>~ \forall s\leq t~~ 
a.s. 
\end{equation*}
In other words, the process $\b(U(t,X^{\kappa,C}_t)+\int_0^tV(s,C_s)ds\b)$ is a positive supermartingale, stopped at the first time of bankruptcy.\\
\rmii {\sc Existence of optimal strategy:} 
For any initial wealth $x>0$, there exists an optimal strategy $(\kappa^*,C^*)$ such that the associated non negative
wealth process $X^{*}=X^{\kappa^*,C^*}\in \GX^c$ issued from $x$ satisfies  $\big(U(t,X^*_t)+\int_0^tV(s,C^*_s)ds\big)$ is a
local martingale.\\
\rmiii To summarize, $U(t,x)$ is the value function of optimization problem with optimal strategies, that is for any maturity $T\geq t$
\begin{equation}\label{myopic value function}
U(t,x)=\esssup_{X^{\kappa,C}\in \GX^c_t(x)} \mathbb{E}\big(U(T,X^{\kappa,C}_T)+\int_t^T V(s,C_s)\ind_{\{X^{\kappa,C}_s(x)\geq 0\}}ds |{\cal F}_t\big) a.s.
\end{equation}
{ The optimal strategy $(X^{*},C^*)$ which is optimal for all these problems, independently of the time-horizon $T$, is called a myopic strategy.}\\
\rmiv  {\sc Strongly $\GX^c$-consistency} 
The system $(\bf U,\bf V)$ is said to be strongly $\GX^c$-consistent if the optimal process $X^{*}(x)$ is strictly increasing with respect to the initial condition $x$.
 \end{Definition}
\noindent Convex analysis  showed  the interest to introduce the convex conjugate utilities $\tilde{U}$  and  $\tilde{V}$  
defined as the Fenchel-Legendre
random field $\tilde U(t,y)= \sup_{c \geq 0, c \in \mathcal{Q}^+}(U(t,c)-cy)$ (similarly for $\tilde{V}$). Under mild regularity assumption, we have the following results (Karatzas-Shreve \cite{KaratzasShreve:01}, Rogers \cite{Rogers}).
\begin{Proposition}[Duality]\label{Duale}
Let $(U,V)$ be a pair of  stochastic $\GX^c$-consistent utilities with optimal strategy $(\kappa^*,C^*)$ leading to the non negative
wealth process $X^{*}=X^{\kappa^*,C^*}$. Then the convex
conjugate system $(\tilde{U},\tilde{V})$  satisfies :\\
\rmi For any admissible state price density process $Y^\nu\in \GY$ with $\nu\in \sigoR$,
$\Big(\tU(t,Y^\nu_t)+\int_0^t\tV(s,Y^\nu_s)ds\Big)$ is a  submartingale, and there exists a unique optimal process $Y^*:=Y^{\nu^*}$  with $\nu^* \in \sigoR$ such that  $\Big(\tU(t,Y^*_t)+\int_0^t\tV(s,Y^*_s)ds\Big)$ is a local martingale.\\
\rmii To summarize, $\tU(t,y)$ is the value function of optimization problem with myopic optimal strategy, that is for any maturity $T\geq t$
\begin{equation}\label{myopic dual value function}
\tU(t,y)=\esssup_{Y^{\nu}\in \GY_t(y)}\E\Big(\tU(T,Y^\nu_T(y))+\int_t^T \tV(s,Y^\nu_s)(y)ds/\F_t \Big), \>a.s. 
\end{equation} 
\rmiii {\sc Optimal Processes characterization} Under regularity assumption, first order conditions imply some links between optimal processes, including their initial conditions,
\begin{eqnarray}\label{first order condition}
 Y^*_t(y)=&U_{x}(t,X^*_t(x))=  V_{c}(t,C^*_t(c)), \quad &y=u_x(x)=v_c(c)
\end{eqnarray} 
The optimal consumption process $C^*_t(c)$ is related to the optimal portfolio $X^*_t(x)$ by the progressive monotonic process $\zeta_t^*(x)$ defined by 
\begin{eqnarray}\label{optimalconsumption}
c=\zeta(x)=-\tv_y(u_x(x)), \quad \zeta_t^*(x)=-\tV_y(U_x(x)), \quad C^*_t(c)=\zeta_t^*(X^*_t(x))
%C^*_t(\zeta(x))=\zeta_t^*(X^*_t(x)), \quad  \zeta_t^*(x)=-\tV_y(U_x(x)), \quad c=\zeta(x)=-\tv_y(u_x(x))
\end{eqnarray} 
\rmiv By Equation \eqref{first order condition}, strong consistency of $(U,V, X^*)$ implies the monotony of $y\mapsto 
Y^*_t(y)$. The system $(\tilde{U},\tilde{V},Y^*)$ is strongly $\GY$-consistent.
\end{Proposition}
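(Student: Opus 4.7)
The plan is to combine the Fenchel--Young inequality $U(t,x)\le \tU(t,y)+xy$ (with equality iff $y=U_x(t,x)$, and the analogous relation for $V,\tV$) with the defining supermartingale property of $Y^\nu X^{\kappa,C}+\int_0^\cdot Y^\nu_u C_u\,du$ from Definition~\ref{SPDP}. The two primal/dual processes are linked only through their cross product, so the submartingale property of the dual process will be extracted from the local-martingale property of the primal one by passing through this bilinear coupling.

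\textbf{Submartingale property (i) and identification of the optimizer.} Fix $Y^\nu\in\GY$ and $s\le t$, set $y:=Y^\nu_s$ and $x:=-\tU_y(s,y)$, so that $U_x(s,x)=y$ and $\tU(s,y)=U(s,x)-xy$. Let $(X^*,C^*)$ be the optimal strategy issued from $x$ at time $s$, whose existence and myopic character are granted by the $\GX^c$-consistency of $(U,V)$. Under standard integrability, the local-martingale property in Definition~\ref{def:conso}(ii) gives
$$U(s,X^*_s)=\mathbb{E}\Bigl[U(t,X^*_t)+\int_s^t V(u,C^*_u)\,du\,\Big|\,\F_s\Bigr].$$
Applying Fenchel--Young at time $t$ to bound $U(t,X^*_t)\le \tU(t,Y^\nu_t)+X^*_tY^\nu_t$ and $V(u,C^*_u)\le \tV(u,Y^\nu_u)+C^*_uY^\nu_u$, then using the supermartingale estimate $\mathbb{E}[X^*_tY^\nu_t+\int_s^tC^*_uY^\nu_u\,du|\F_s]\le X^*_sY^\nu_s$, yields
$$\tU(s,Y^\nu_s)\le \mathbb{E}\Bigl[\tU(t,Y^\nu_t)+\int_s^t\tV(u,Y^\nu_u)\,du\,\Big|\,\F_s\Bigr].$$
Equality throughout forces simultaneously the Fenchel equalities $Y^\nu_u=U_x(u,X^*_u)=V_c(u,C^*_u)$ and the true martingale (not just supermartingale) property of $Y^\nu X^*+\int Y^\nu C^*\,du$. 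This suggests defining $Y^*_t:=U_x(t,X^*_t)$; It\^o's formula, combined with the dynamics \eqref{eq:DynamX} of $X^*$ and the first-order conditions at the primal optimum, exhibits the It\^o decomposition of $Y^*$ with drift $-r_tY^*_t$ and volatility whose $\sigR_t$-component equals $-\eta^\sR_tY^*_t$, so that $Y^*=Y^{\nu^*}$ for a unique $\nu^*\in\sigoR$ solving \eqref{Ynu}.

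\textbf{Dual value function (ii) and first-order conditions (iii).} Point (ii) is now a direct consequence: the inequality above says $\tU(t,y)\ge \mathbb{E}[\tU(T,Y^\nu_T(y))+\int_t^T\tV(u,Y^\nu_u(y))du|\F_t]$ for every admissible $Y^\nu$ starting from $y$, with equality attained at $Y^*$, which is exactly \eqref{myopic dual value function}. Point (iii) follows from reading off the equality cases: $Y^*_t(y)=U_x(t,X^*_t(x))=V_c(t,C^*_t(c))$ with $y=u_x(x)=v_c(c)$ at $t=0$; inverting $V_c(t,\cdot)$ through $-\tV_y(t,\cdot)$ delivers $C^*_t(c)=-\tV_y(t,Y^*_t)=-\tV_y(t,U_x(t,X^*_t(x)))=\zeta^*_t(X^*_t(x))$.

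\textbf{Strong consistency (iv) and the main obstacle.} For (iv), if $x\mapsto X^*_t(x)$ is strictly increasing, then with $I:=u_x^{-1}$ (strictly decreasing) and $U_x(t,\cdot)$ strictly decreasing, the composition $y\mapsto Y^*_t(y)=U_x(t,X^*_t(I(y)))$ is strictly monotone; together with the dual relations of (i)--(iii), this is the $\GY$-strong consistency of $(\tU,\tV,Y^*)$. The most delicate step of the proof is the identification $Y^*=Y^{\nu^*}$ with $\nu^*\in\sigoR$: one must verify that the process $U_x(t,X^*_t)$ coming from the primal optimum inherits the constrained state-price structure. This uses crucially that the primal control $\kappa^*$ is free within $\sigR_t$, so that first-order optimality in the $\sigR_t$-direction determines the drift of $Y^*$ and pins its volatility onto the complementary subspace $\sigoR_t$, leaving no ambiguity in $\nu^*$.
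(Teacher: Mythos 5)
The paper itself gives no proof of this proposition: it merely invokes Karatzas--Shreve and Rogers ``under mild regularity assumption'' and points to \cite{MrNek04} for details. Your argument is exactly the standard convex-duality route those references take (Fenchel--Young coupling plus the bilinear supermartingale $Y^\nu_t X^{\kappa,C}_t+\int_0^t Y^\nu_u C_u\,du$), so in that sense you have reconstructed the intended proof. Two technical points are glossed over but are standard: you need an optimal wealth process started at time $s$ from the $\F_s$-measurable random initial condition $x=-\tU_y(s,Y^\nu_s)$, whereas Definition \ref{def:conso} only grants optimizers for deterministic initial wealth at time $0$ (one must invoke the flow/dynamic-programming structure, and the true-martingale rather than local-martingale property of $U(t,X^*_t)+\int_0^t V(u,C^*_u)du$, since a positive local martingale is only a supermartingale, which gives the wrong direction here); and the identification of the drift and of the $\sigoR$-valued volatility of $U_x(t,X^*_t)$ via It\^o's formula requires the It\^o decomposition of the random field $U$ together with the consistency (HJB-type) constraint, which is the nontrivial content of \cite{MrNek01,MrNek04} --- you correctly flag this as the delicate step but do not carry it out.

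There is, however, one genuine internal contradiction you should not leave standing. Your derivation in step (i) correctly yields $\tU(s,Y^\nu_s)\le \E\bigl[\tU(t,Y^\nu_t)+\int_s^t\tV(u,Y^\nu_u)\,du\,\big|\,\F_s\bigr]$, i.e.\ the submartingale property. But in step (ii) you assert the reverse inequality $\tU(t,y)\ge \E[\cdots]$ in order to land on the essential supremum of \eqref{myopic dual value function}. Both cannot hold except at the optimum. The submartingale property of (i), with equality at $Y^*$, characterizes $\tU(t,y)$ as an essential \emph{infimum} over $\GY_t(y)$ --- which is the classical form of the dual problem for the convex, decreasing conjugate $\tU$. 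The $\esssup$ in \eqref{myopic dual value function} is therefore inconsistent with (i) as stated; your computation actually establishes the infimum version, and you should say so explicitly rather than flip the inequality to match the displayed formula.
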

\noindent The main consequence of the strong consistency is to provide a closed form for consumption consistent utility system.
\begin{Theorem}\label{explicite construction}
Let $\bar \zeta_t(x)$ be a positive progressive process, increasing in $x$ and let $\bar X_t(x)$ be a strictly monotonic solution with inverse $\bar \X_t(z)$ of the SDE,
$$ 
d \bar X_t(x)= \bar X_t(x) [r_tdt +  \kappa^*_t(\bar X_t(x))(dW_t+\eta^\sigR_tdt)] -\bar \zeta_t(\bar X_t(x))\,dt,\>\>\kappa^*_t(\bar X_t(x))\in \sigR_t.
$$
Let $\bar Y_t(y)$ be a strictly monotonic solution with inverse $(\bar \Y_t(z))$ of the SDE
$$
d \bar Y_t(y)=  \bar Y_t(y)\big[ -r_t dt + (\nu^*_t(\bar Y_t(y))- \eta^{\sigR}_t). dW_t\big],\quad \nu^*_t(\bar Y_t(y))\in \sigoR_t.
$$
Given  a deterministic utility system $(u,v)$ such that $\bar \zeta_0(x)=\zeta(x)=-\tv_y(u_x(x))$, there exists a $\GX^c$-consistent progressive  utility  system $(U,V)$ such that $(\bar X_t(x),\bar Y_t(y))$ are the associated optimal processes,
 defined by:
\begin{equation}\label{charac}
U_x(t,x)={\bar Y}_t(u_x({ \bar\X}_t(x)), \>  V_c(t,c)=U_x(t,{\bar \zeta}^{-1}_t(c)) \>\text{\rm with}
\>\bar \zeta^{-1}(c)=u_x^{-1}(v_c(c))
\end{equation}
\end{Theorem}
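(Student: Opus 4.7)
The proof is constructive. I would first define $U$ and $V$ by integrating the marginal formulas in \eqref{charac}:
$$
U(t,x) := \int_0^x \bar Y_t\bigl(u_x(\bar\X_t(\xi))\bigr)\, d\xi, \qquad
V(t,c) := \int_0^c U_x\bigl(t,\bar\zeta_t^{-1}(c')\bigr)\, dc',
$$
and verify that these define a progressive utility system. Strict positivity and strict decrease of $U_x(t,\cdot)$, hence strict concavity and strict monotonicity of $U(t,\cdot)$, follow from strict monotonicity of the flow $\bar X_t$ (so $\bar\X_t$ is also strictly monotonic), strict decrease of $u_x$, and strict monotonicity of $\bar Y_t(\cdot)$ inherited from linearity of the SDE \eqref{Ynu}. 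The Inada conditions at $0$ and $\infty$ transfer from those of $u$ using continuity of the flows, which map $(0,\infty)$ onto itself. At $t=0$, the initial conditions $\bar X_0=\bar Y_0=\mathrm{Id}$ give $U(0,\cdot)=u$; the hypothesis $\bar\zeta_0=\zeta$ together with $\zeta^{-1}=u_x^{-1}\circ v_c$ yields $V_c(0,c)=u_x(u_x^{-1}(v_c(c)))=v_c(c)$, so $V(0,\cdot)=v$.

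Second, the first-order relations \eqref{first order condition} and \eqref{optimalconsumption} are built into the construction: $U_x(t,\bar X_t(x))=\bar Y_t(u_x(x))$ holds directly from the definition of $U_x$, and setting $C^*_t:=\bar\zeta_t(\bar X_t(x))$ gives $V_c(t,C^*_t)=U_x(t,\bar X_t(x))=\bar Y_t(u_x(x))$ from the definition of $V$. This identifies $\bar X_t(x)$, $C^*_t$ and $\bar Y_t(u_x(x))$ as the candidates for optimal wealth, consumption and state-price density.

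The main step is $\GX^c$-consistency. I would apply the It\^o--Ventzel formula to $U(t,\bar X_t(x))$ after determining the It\^o decomposition of the random field $U$. Differentiating the identity $U_x(t,\bar X_t(x))=\bar Y_t(u_x(x))$ in $t$ and matching drift and martingale parts against the SDEs for $\bar X_t$ and $\bar Y_t$ determines the Itô decomposition of $U_x(t,\cdot)$ as a random field; integrating in $x$ yields $dU(t,x)=\beta(t,x)\,dt+\gamma(t,x)\cdot dW_t$ for explicit $\beta,\gamma$. The resulting It\^o--Ventzel drift of $U(t,\bar X_t(x))$ combines $\beta(t,\bar X_t)$, the transport term $U_x\cdot[r_t\bar X_t+\bar X_t\kappa^*_t\cdot\eta^\sR_t-\bar\zeta_t(\bar X_t)]$, the quadratic term $\tfrac12 U_{xx}\bar X_t^2\|\kappa^*_t\|^2$, and the cross term $\partial_x\gamma\cdot\bar X_t\kappa^*_t$. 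Using $U_x(t,\bar X_t)=\bar Y_t(u_x(x))$ (whose drift is $-r_t\bar Y_t(u_x(x))$) together with $V_c(t,\bar\zeta_t(\bar X_t))=U_x(t,\bar X_t)$, this drift collapses to $-V(t,\bar\zeta_t(\bar X_t(x)))$, giving the local martingale property along the optimum. For an arbitrary admissible $(\kappa,C)$, the analogous It\^o computation adds a non-positive quadratic-in-$\kappa$ deviation (by concavity of $U$, maximized at $\kappa^*$) and a Fenchel gap $V(t,C)-U_x(t,X^{\kappa,C})\,C\le V(t,\bar\zeta_t(X^{\kappa,C}))-U_x(t,X^{\kappa,C})\,\bar\zeta_t(X^{\kappa,C})$ coming from $V_c(t,\bar\zeta_t(x))=U_x(t,x)$; this produces the supermartingale property.

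The main obstacle is the random-field It\^o calculation underlying Paragraph 3: extracting $\beta$ and $\gamma$ from the dynamics of $U_x$ requires commuting stochastic integration in $t$ with integration in $x$ and justifying the It\^o--Ventzel formula. This demands sufficient smoothness and integrability of $\bar X_t,\bar Y_t,\bar\zeta_t$ in $x$, propagated from the regularity of the SDE coefficients, but is standard once that regularity is in place.
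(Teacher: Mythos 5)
The paper itself contains no proof of Theorem \ref{explicite construction}; it refers to \cite{MrNek04}. Your strategy --- define $U$ and $V$ by integrating the prescribed marginals, check the utility axioms and the time-zero compatibility, then establish consistency through an It\^o--Ventzel / HJB drift computation --- is exactly the route of that reference, and the placement of the first-order identities and of the Fenchel-gap argument for the consumption term is correct. One misstatement: you justify strict monotonicity of $y\mapsto \bar Y_t(y)$ by ``linearity of the SDE \eqref{Ynu}'', but the SDE for $\bar Y$ in the theorem is \emph{not} linear, since its volatility coefficient $\nu^*_t(\bar Y_t(y))$ depends on the solution itself; monotonicity of $\bar Y$ is a standing hypothesis of the theorem and should simply be invoked as such.

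The genuine gap is in the supermartingale step. Writing $dU(t,x)=\beta(t,x)\,dt+\gamma(t,x)\cdot dW_t$, the It\^o--Ventzel drift of $U(t,X^{\kappa,C}_t)+\int_0^tV(s,C_s)ds$ is a concave quadratic in $\kappa$, and nonpositivity for \emph{all} admissible $\kappa$ together with nullity along $(\bar X,\bar\zeta_t(\bar X))$ requires that its maximizer over $\sigR_t$ coincide with the coefficient $\kappa^*_t(\bar X_t(x))$ of the SDE defining $\bar X$, i.e.\ that the first-order condition $\gamma_x^{\sR}(t,x)=-U_x(t,x)\,\eta^{\sR}_t-x\,U_{xx}(t,x)\,\kappa^*_t(x)$ hold identically. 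You assert this (``maximized at $\kappa^*$'') but never verify it, and it does \emph{not} follow from the drift vanishing along the candidate optimum: if the true maximizer were some $\hat\kappa\neq\kappa^*$, the drift would be positive there and the supermartingale property would fail. Verifying it is the substantive computation of the proof: one must express $\gamma_x$ through the volatilities of $\bar Y$ and of the inverse flow $\bar\X$ (obtained from that of $\bar X$ by the inverse-flow formula) and check the cancellation --- this is the ``utility SPDE'' of \cite{MrNek03,MrNek04}. The consumption direction, by contrast, is genuinely automatic since $V_c(t,\bar\zeta_t(x))=U_x(t,x)$ is built into the definition of $V$. Finally, for $U$ to be a utility with $U(t,0^+)=0$ and for the Inada conditions to transfer, you need integrability of $U_x(t,\cdot)=\bar Y_t(u_x(\bar\X_t(\cdot)))$ near the origin, which is the integrability condition on $u_x,v_c$ near zero that the paper imposes in Section \ref{sect:mixture}; this should be stated as an assumption rather than absorbed into ``standard regularity''.
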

\noindent Observe that the consumption optimization contributes only through the conjugate $\tilde V$ of the progressive  utility $\bf V$.
We refer to \cite{MrNek04} for detailed proofs.
\subsection{$\GX^c$-consistent utilities with linear optimal processes}
\noindent The simplest example of monotonic process is given by  linear processes with positive (negative) stochastic coefficient. It is easy to characterize consumption consistent utility sytems $(U,V)$ associated with linear optimal processes
\begin{eqnarray*}
 X^*_t(x)=xX^*_t \>\text{ with  }  \>X^*_t:=X^*_t(1),\qquad Y^{*}_t(y)=yY^*_t \> \text{ with  } \>Y^*_t:=Y^*_t(1)
\end{eqnarray*}
\begin{Proposition} \label{consforwardlinear} 
\rmi A strongly $\GX^c$-consistent progressive  utility $(U,V)$ generates linear optimal
wealth and state price processes if and only if it is  of the form\\
\begin{equation*}
U(t,x)=Y^*_t X^*_t\,u(\frac{x}{X^*_t}), ~ V(t,c)= \zeta_t^* U(t,\frac{c}{\zeta_t^*})\quad \text{ with } {\zeta}_t(x)=x
\>{ \zeta_t}.
\end{equation*}
%with a consumption process \
The  optimal processes are  then given by 
$$ X^*_t(x)=x X^*_t, \>Y^*_t(y)=y Y^*_t, \>\text{and}\quad C^*_t(x)=X^*_t(x)\zeta_t^*.$$
\rmii {\sc Power utilities} A consumption consistent progressive power utility  $(U^\theta,V^\theta)$ (with risk aversion
coefficient
$\theta$) generates necessarily linear optimal processes and is, consequently, of  the form
$\b(U^{(\theta)}(t,x)=\frac{Y^*_tX^*_t}{1-\theta}\b(\frac{x}{X^*_t}\b)^{1-\theta}\>,
V^{(\theta)}(t,x)=(\hat \psi_t)^\theta \frac{Y^*_tX^*_t}{1-\theta}\b(\frac{x}{X^*_t}\b)^{1-\theta}\b)$.
\end{Proposition}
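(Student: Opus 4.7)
The strategy is to invoke Theorem~\ref{explicite construction} with linear candidate processes plugged in as data, and then integrate the resulting marginal-utility identity. For the implication ``$\Leftarrow$'' of~(i), set $\bar X_t(x)=xX^*_t$ (so $\bar\X_t(z)=z/X^*_t$), $\bar Y_t(y)=yY^*_t$, and $\bar\zeta_t(x)=x\zeta^*_t$. Theorem~\ref{explicite construction} produces a $\GX^c$-consistent system whose optimal processes are precisely these, and formula~\eqref{charac} reads $U_x(t,x)=Y^*_t\,u_x(x/X^*_t)$. Integrating in $x$ from $0$ with $U(t,0)=0$ (Inada condition) gives $U(t,x)=Y^*_tX^*_t\,u(x/X^*_t)$. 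Similarly $\bar\zeta_t^{-1}(c)=c/\zeta^*_t$ yields $V_c(t,c)=U_x(t,c/\zeta^*_t)=Y^*_t\,u_x(c/(\zeta^*_tX^*_t))$, and integration then produces $V(t,c)=\zeta^*_t\,U(t,c/\zeta^*_t)$.

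For the converse ``$\Rightarrow$'' of~(i), assume the optimal processes are linear. The first-order relation~\eqref{first order condition} reads $u_x(x)Y^*_t=U_x(t,xX^*_t)$ for every $x>0$; the change of variable $z=xX^*_t$ rewrites this as $U_x(t,z)=Y^*_t\,u_x(z/X^*_t)$, and integration in $z$ recovers the claimed form of $U$. Substituting this expression for $U_x$ into the progressive consumption rule $\zeta^*_t(x)=-\tV_y(U_x(t,x))$ from~\eqref{optimalconsumption} makes $\zeta^*_t(x)$ a function of $x/X^*_t$ only; strong monotonicity of $X^*$ together with~\eqref{charac} then forces $\zeta^*_t(x)=x\zeta^*_t$ with $\zeta^*_t:=\zeta^*_t(1)$, delivering the announced form of $V$.

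For~(ii), a progressive power utility means $U(t,x)=A_t\,x^{1-\theta}/(1-\theta)$ and $V(t,c)=B_t\,c^{1-\theta}/(1-\theta)$ for positive adapted $A_t,B_t$; combined with the $(X,C)$-linearity of the dynamics~\eqref{eq:DynamX}, this yields $(1-\theta)$-homogeneity of the value function~\eqref{myopic value function} under the rescaling $(X,C)\mapsto(\lambda X,\lambda C)$. Uniqueness of the optimum gives $X^*_t(x)=xX^*_t$, and Proposition~\ref{Duale} delivers $Y^*_t(y)=yY^*_t$. Plugging $u(y)=y^{1-\theta}/(1-\theta)$ into the formula from~(i) and setting $\hat\psi_t:=\zeta^*_t$ produces the announced form for $(U^\theta,V^\theta)$.

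The main obstacle is the propagation of linearity from $(X^*,Y^*)$ to the progressive consumption rule $\zeta^*_t(x)$ in the converse direction of~(i); ruling out non-homogeneous $\zeta^*_t(\cdot)$ requires carefully combining \eqref{optimalconsumption}, the explicit expression derived above for $U_x$, and the strong-monotonicity hypothesis from Definition~\ref{def:conso}.
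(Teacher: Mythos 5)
Your forward direction and your derivation of the formula for $U$ in the converse are correct and essentially the paper's route: the first-order relation $Y^*_t\,u_x(x)=U_x(t,xX^*_t)$, a change of variable, and integration from $0$ using $U(t,0)=0$. The genuine gap is exactly where you flag it, and your proposed fix does not close it. Knowing $U_x(t,x)=Y^*_t\,u_x(x/X^*_t)$ only shows that $\zeta^*_t(x)=-\tV_y(t,U_x(t,x))$ is a function of $x/X^*_t$; since $V$ (hence $\tV$) is still unknown at this stage, nothing prevents that function from being nonlinear, and ``strong monotonicity of $X^*$ together with \eqref{charac}'' cannot force homogeneity --- any increasing map of $x/X^*_t$ is compatible with monotonicity. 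The paper closes this point by a different and more elementary argument, reading linearity off the wealth SDE \eqref{eq:DynamX} itself: if $x\mapsto X^*_t(x)=xX^*_t$ solves an SDE with drift $x\big(r_t+\kappa^*_t(x)\cdot\eta_t\big)-\zeta^*_t(x)$ and diffusion $x\kappa^*_t(x)$, then both coefficients must be linear in the state, which forces $\kappa^*_t(x)\equiv\kappa^*_t$ (and, on the dual side, $\nu^*_t(y)\equiv\nu^*_t$) and then $\zeta^*_t(x)=x\,\zeta^*_t$, the consumption vanishing at $x=0$ by the bankruptcy convention. Only after that does the identity $-\tV_y(t,U_x(t,X^*_t(x)))=X^*_t(x)\zeta^*_t$, combined with monotonicity of $X^*$ and $U_x$, yield $V_c(t,c)=U_x(t,c/\zeta^*_t)$ and, by integration, the announced form of $V$. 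You should replace your ``function of $x/X^*_t$ only'' step by this coefficient-matching argument.

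For (ii) the paper merely asserts that power utilities generate linear optimal processes; your homogeneity argument (the rescaling $(X,C)\mapsto(\lambda X,\lambda C)$ preserves admissibility and multiplies the criterion by $\lambda^{1-\theta}$, so uniqueness of the optimizer gives $X^*_t(\lambda x)=\lambda X^*_t(x)$) is a legitimate and more explicit way to justify that claim, after which plugging the power $u$ into the formula of (i) is exactly what the paper does.
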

\begin{proof} If $X^*_t(x)=xX^*_t $ and $Y^{*}_t(y)=yY^*_t$, their inverse flows are also linear and
$\X(t,x)=\frac{x}{X^*_t},\Y(t,y)=\frac{y}{Y^*_t}$. \\
 \rmi \rma The linearity with respect to its initial condition of the solution of one dimensional SDE with drift $b_t(x)$ and
diffusion coefficient $\sigma_t(x)$ can be satisfied only when the coefficients $b_t(x)$ and $\sigma_t(x)$ are affine in $x$,
that is $b_t(x)=x b_t\,$ and 
$\sigma^i_t(x)=x \sigma_t^{i} \,$,   $b$ and $\sigma^{i}$ being one dimensional progressive processes. 
Since the only coefficient with some non linearity in the dynamics of $Y^{*}_t(y)$ is $y \nu^*_t(y)$, the previous condition implies that $\nu^*_t(y)$ does not depend on $y$. By the same argument, we see that $x\kappa^*_t(x)$ is linear and $\kappa^*_t(x)$ also does not depend on $x$. For the consumption process, $\zeta^*_t(x)$ the linear condition becomes  $\zeta^*_t(x)= x \zeta_t^*.$\\
\rmb We are concerned by strongly consistent progressive utilities, since optimal processes are monotonic by definition.
Then, since
$Y^*_t(u_x(x))=U_{x}(t,X^*_t(x))$,  we see that the marginal utility $U_{x}$ is given by $U_{x}(t,x)=u_x(x/X^*_t)Y^*_t$. By taking the primitive with the condition $U(t,0)=0,$ $U$ is given by $U(t,x)=u(x/X^*_t)X^*_tY^*_t$.\\
\rmc We know that $C^*_t(x)=X^*_t(x){ \zeta_t^*}$ and
${\bar \zeta_t}(t,X^*_t(x))=-\tV_y(t,U_x(t,X^*_t(x)))$ (from optimality conditions).
Thus $-\tV_y(t,U_x(t,X^*_t(x)))=X^*_t(x){ \zeta^*}_t,\text{ a.s. } \forall t,x.$
From monotonicity of $X^*$ and $U_x$, we then conclude that 
$V_c(t,c)=U_x(t,\frac{c}{\zeta^*_t}) \ \text{ a.s. } \forall t,c.$
Integrating yields the desired formula.\\
\rmii Power-type utilities generate linear optimal processes. So, we only have to consider  initial power utilities $u(x)=k x^{1-\theta}/(1-\theta), v(c)=k_v c^{1-\theta}/(1-\theta)$ with the same risk-aversion coefficient $\theta$ to characterize the system.
\end{proof}
\begin{Remark}
In order to separate the messages and as the  risk aversion  does not vary in this result, we have deliberately
omitted the indexing of the  optimal  process by $ \theta $, especially in the explicit case of  power utilities.
Although, optimal process may reflect a part of this risk aversion, therefore in the last section of this work, we take
care to make them also dependent on this parameter
%, especially because $ \theta $ will not be constant is this last section.
\end{Remark}

\subsection{ Value function of backward classical utility maximization problem}\label{transition:classic} 
As for example in  the Ramsey rule, utility maximization problems in the economic literature use classical utility functions.  
This subsection points out the similarities and the differences between consistent progressive 
utilities and backward classical value functions, and their corresponding portfolio/consumption optimization problems. \\[-8mm]
\paragraph{Classical portfolio/consumption optimization problem and its conjugate problem}
The classic problem of optimizing consumption and terminal wealth  is determined by a  fixed  time-horizon $T_H$ and two deterministic utility functions $u(.)$ and $v(t,.)$  defined up to this horizon.
Using the same notations as  previously, the classical optimization problem  is formulated as the following maximization problem,
\begin{eqnarray}\label{pbopticlassic}
\sup_{(\kappa,c) \in  \GX^c  }  \mathbb{E} \Big( u(X^{\kappa,c}_{T_H})+\int_{0}^{T_H} 
v(t,{c^\mathbb{}_t}) dt \Big).
\end{eqnarray}
For any $[0, T_H]$-valued  $\mathbb{F}$-stopping $\tau$  and  for any positive random variable 
$\mathcal{F}_{\tau}$-mesurable $\xi_{\tau}$,  $ \GX^c(\tau,\xi_{\tau})$ denotes the set of admissible strategies  starting at time $\tau$ with an initial positive wealth  $\xi_{\tau}$, stopped when the wealth process reaches 0.
The corresponding value system (that is a family of random variables indexed by $(\tau, \xi_{\tau})$) is defined as, 
\begin{eqnarray}\label{pbopticlassic}
\mathcal{U}(\tau,\xi_{\tau})= \esssup_{(\kappa,c) \in  \GX^c(\tau,\xi_{\tau})  } \mathbb{E}  \Big( u(X^{\kappa,c} _{T_H}(\tau,\xi_{\tau}))+\int_\tau^{T_H}
v(s,{c^\mathbb{}_s}) ds | {\cal
F}_\tau \Big), \>a.s.
\end{eqnarray}
with terminal condition $\mathcal{U}(T_H,x)=u(x)$. \\
 We assume the existence of  a  progressive utility still denoted $\mathcal{U}(t,x)$ that  aggregates  these system (that  is more or less implicit in the literature). When the dynamic programming principle holds true, the utility system $(\mathcal{U}(t,x), v(t,.)$ is $\GX^c $-consistent.
 Nevertheless, in the backward point of view, it is not easy to show the existence of optimal monotonic processes, or equivalently the strong consistency.
Besides, the optimal strategy in the backward formulation is not myopic and depends on the time-horizon $T_H$. In the economic literature, $T_H$ is often
taken equal to $+\infty$ and the utility function is separable in time with exponential decay at a rate $\beta$ interpreted as the pure time preference parameter: $v(t,c)= e^{-\beta t } v(c)$. 
It is implicitly assumed that such utility function are equal to zero when $t$ tends to infinity. 
%===========================================================

\section{Ramsey rule and Yield Curve Dynamics}
\noindent As our aim is to study long term affine yields curves,  we will focus in the following on  affine optimal processes. But let us first recall  some results on the Ramsey rule with progressive utility. 
\subsection{Ramsey rule }
Financial market cannot give a satisfactory answer for the modeling of long term yield curves,   since  for longer maturities, the bond market is highly illiquid and standard financial interest rates models cannot be easily extended. \\[-9mm]
\paragraph{Economic point of view of Ramsey rule } Nevertheless, an abundant literature
on the economic aspects of long-term policy-making has been developed.  
The Ramsey rule is the reference equation in the macroeconomics literature for the computation of long term discount factor. 
The Ramsey rule comes back to the seminal paper of Ramsey \cite{Ramsey}  in 1928  where  economic interest rates   are  linked with the marginal utility of the aggregate consumption at the economic equilibrium.
More precisely, the economy is represented by the strategy of a risk-averse representative agent, whose utility function on 
consumption rate at date $t$ is the  deterministic function $v(t,c)$. 
 Using an equilibrium point of view with infinite horizon, the Ramsey rule   connects at time $0$ the equilibrium rate for maturity $T$ with the
marginal utility $v_c(t,c)$ of the random exogenous optimal consumption rate  $(C^e_t)$ by
\begin{equation}\label{RamseyRule}
R^e_0(T)=-\frac{1}{T}\ln\frac{\E[v_c(T,C^e_T)]}{v_c(c)}.
\end{equation}
Remark that the Ramsey rule in the economic literature relies on a backward formulation with infinite horizon, an  usual
setting is to assume separable in time  utility function with exponential decay at rate $\beta>0$ and constant risk aversion
$\theta, (0<\theta < 1)$, that is $v(t,c)=Ke^{-\beta t}\frac{c^{1-\theta}}{1-\theta}$.  $\beta$ is the pure time preference
parameter, i.e. $\beta$ quantifies the agent preference of immediate goods versus future ones.  $C^e$ is exogenous and is
often modeled as a geometric Brownian motion.  

In the financial point of view we adopt here  the agent may  invest in a financial market in addition to the money market.
We consider an arbitrage approach with exogenously given interest rate,  instead of an equilibrium approach that determines them endogenously. 
It seems also essential for such maturity to adopt a sequential decision scheme that
allows to revise the first decisions in the light of new knowledge and direct experiences:
the utility criterion must be adaptative and adjusted to the information flow. That is why we consider consistent progressive utility. 
The financial market is 
 an incomplete It\^o  financial  market: notations are  the one described in Section \ref{descmarket},  with a $n$ standard Brownian
motion $W$, a (exogenous) financial short term interest rate  $(r_t)$ and a $n$-dimensional risk premium $(\eta^\sR_t)$. In
the following,
we adopt a financial point of view and consider either the progressive or the backward formulation for the optimization problem. \\[-9mm]
\paragraph{Marginal utility of consumption and  state price density process\\}
\noindent \rmi{\sc The forward dynamic utility problem}\\
Proposition \ref{Duale} gives a  pathwise relation  between the marginal utility of the optimal consumption and the optimal state price density process, where the parameterization is done through the initial wealth $x$, or equivalently $c$ or $y$ since
$c =  -\tv_y(u_x(x))=  -\tv_y(y) $,
\begin{equation}\label{Ramseypathdynamic}
V_c(t, C^{*}_t(c)) = Y^{*}_t(y),   \quad    t \geq 0     \quad   \mbox{ with  } \quad v_c(c) =  y.
 \end{equation}
 The forward point of view emphasizes the key rule played by the monotony of $Y$ with respect to the initial condition $y$,
under 
regularity conditions of the progressive utilities (cf \cite{MrNek01}). 
Then as function of $y$, $c$ is decreasing, and $C^{*}_t(c)$ is an increasing function of $c$.
This question of monotony is frequently avoided, maybe because with power utility functions (the example often used in the
literature)
$Y^{*}_t(y)$ is linear in $y$  as $\nu^*$ does not dependent on $y$. We shall come back  to that issue in Section \ref{sect:mixture}. \\
\rmii{\sc The backward classical optimization problem}\\
  In the  classical optimization  problem,  
both utility functions for terminal wealth and   consumption rate are deterministic, and a given horizon $T_H$ is fixed.  In this backward point of view,  
optimal processes are depending on the time horizon $T_H$ : in particular  the optimal  consumption rate $C^{*,H}(y)$ depends on the time horizon $T_H$ through the optimal state price density process $Y^{*,H},$ leading to the same pathwise relation (\ref{Ramseypathdynamic}) as in the forward case, 
\begin{equation}\label{Ramseypath}
 \frac{v_c(t, C^{*,H}_t(c))}{v_c(c)} =  \frac{Y^{*,H}_t(y)}{y},   \quad    0\leq t\leq {T_H}       \quad   \mbox{ with  } \quad {v_c(c)} =  y.
\end{equation}
So, in general the notation of the forward case are used, but
 with the additional symbol $H$ ($Y^{*,H}, c^{*,H}, X^{*,H}$) to address the dependency on $T_H$ in  the classical backward
problem. \\[1mm]
{\bf Conclusion:} Thanks to the pathwise relation (\ref{Ramseypathdynamic}), the Ramsey rule yields to a description of the equilibrium interest rate as a	 function of the optimal  state price density process $Y^{*,e}$, $R^e_0(T)(y)=-\frac{1}{T}\ln \mathbb{E}[Y^{*,e}_T(y)/y]$, that allows to give a financial interpretation in terms of zero coupon bonds. More dynamically in time,  $\forall t < T,$%with $Y^{*}_{t,T}(y) :=Y^{*}_T(y)/Y^{*}_t(y), $
%$Y^{*}_{t,T}(y) :=\frac{Y^{*}_T(y)}{Y^{*}_t(y)}, $
\begin{equation}\label{Ramseyfin}
R^e_t(T)(y):=-\frac{1}{T-t} \ln \mathbb{E}\left[\frac{V_c(T,C^{*,e}_T(c))}{V_c(t,C^{*,e}_t(c))}\big|
\mathcal{F}_t\right]=-\frac{1}{T-t}\ln
\mathbb{E}\left[\frac{Y^{*,e}_T(y)}{Y^{*,e}_t(y)}\b| \mathcal{F}_t\right].
\end{equation}
%
%
%
%============================================================================
\subsection{Financial yield curve dynamics}
 Based on the foregoing,
it is now proposed to make  the connection between the economic and the financial point of view through the state price
densities processes and the pricing.\\
 Let  $\b(B^m(t,T),\>t\leq T\b)$, ($m$ for market), be the market price  at time $t$ of a  zero-coupon bond paying one unit of cash at maturity $T$. Then, the market yield curve is defined as usual by the actuarial relation, $B^m(t,T)=\exp(-R^{m}_t(T)(T-t)).$
Thus our aim is to give a  financial interpretation of $\mathbb{E}\Big[\frac{Y^{*}_T(y)}{Y^{*}_t(y)}\b| \mathcal{F}_t\Big]$ for $ t\leq T$ in terms of price of zero-coupon bonds.\\
 Remark that $Y^{*}$ is  solution of an optimization problem whose criteria depend on the utility functions, yet the utilities do not intervene in the dynamics of $Y^*$. 
In term of pricing, the terminal wealth at maturity $T$ represents the payoff at maturity of the financial product, whereas the consumption may be interpreted
as the dividend distributed by the financial product before $T$. 
%The zero-coupon bond paies one unit of cash at maturity and does not distribute any dividends. 
%
\paragraph{Replicable bond}
For admissible portfolio without consumption $X^\kappa_t$,  it is straightforward that for any state price process $Y^\nu_tX^\kappa_t$ is a local martingale, and so under additional integrability assumption, $ X^\kappa_t=\mathbb{E}\big[X^\kappa_T \frac{Y_T^\nu}{Y^\nu_t}\b|\mathcal{F}_t\big]$. So the price of $X^\kappa_T$ does not depend on $\nu$. This property holds true for any derivative whose the terminal value is replicable by an admissible portfolio without consumption, for example a replicable bond,
$$B^m(t,T)=\mathbb{E}\big[\frac{Y_T^0}{Y^0_t}\b|\mathcal{F}_t\big]=\mathbb{E}^{\Q}\b[e^{-\int_t^Tr_s
ds}\b|\mathcal{F}_t\big]=\mathbb{E}\Big[\frac{Y^{}_T(y)}{Y^{}_t(y)}\b| \mathcal{F}_t\Big]$$
Besides, $B^m(t,T)=\mathbb{E}\Big[\frac{Y^{}_T(y)}{Y^{}_t(y)}\b| \mathcal{F}_t\Big]$ for any state price density process $Y$
with goods integrability property.
\\[-8mm]
%Moreover, %if $\bar{X}$ denotes an admissible wealth (without consumption)  which replicates  $1$ at the maturity $T$, then
%the bond $B^m(t,T)$  may be considered as the  price at time  $t$ of any admissible portfolio $\bar{X}$  with terminal wealth
%$\bar{X}_{T}=1$, i.e.,  $B^m(t,T)=\mathbb{E}^{\Q}\Big[e^{-\int_t^Tr_sds}\bar{X}_T\b| \mathcal{F}_t\Big]$.
\paragraph{Non hedgeable  bond}
For non hedgeable zero-coupon bond, the pricing by indifference is a way (among others) to evaluate the risk coming from the unhedgeable part. \\
The {\em utility indifference price} is the cash amount for which the investor is indifferent between selling (or buying) a given quantity of the claim or not. 
This pricing rule is non linear and provides a bid-ask spread. If the investor is aware of its sensitivity to the unhedgeable risk, they can try to transact
for  a little amount. In this case, the ``fair price'' is the marginal utility indifference  price  (also called Davis price
\cite{Davis}), it
 corresponds to the  zero marginal rate of substitution. 
 We denote by $B^{u}(t,T)$ ($u$ for utility) the marginal utility price at time $t$ of  a zero-coupon  bond paying one cash unit at maturity $T$, that is $\quad B^{u}(t,T)=B^{u}_t(T, y)=  \mathbb{E}\B[ \frac{Y^{*}_{T}(y)}{Y^*_t(y)} \b| \mathcal{F}_t  \B]$. Based on the link between optimal state price density and optimal consumption, we see that
\begin{equation}\label{linkYyieldcurve}
B^{u}_t(T,y):=B^{u}(t,T)(y)=   \mathbb{E}\B[ \frac{Y^{*}_{T}(y)}{Y^*_t(y)} \b| \mathcal{F}_t  \B]=\mathbb{E}
\B[\frac{V_c(T,C^{*}_T(c))}{V_c(t,C^*_t(c))} \b| \mathcal{F}_t\B], \quad v_c(c)=y.
\end{equation}
Remark that $B^{u}(t,T)(y)$ is also equal to $\mathbb{E}
\B[\frac{U_x(T,X^{*}_T(x))}{U_x(t,X^*_t(x))} \b| \mathcal{F}_t\B]$. Nevertheless, besides the economic interpretation, the formulation through the 
optimal consumption is more relevant than the formulation through the 
optimal wealth : indeed  the utility from consumption $V$ is given, while the utility from wealth  $U$ is more constrained. \\
According to the Ramsey rule (\ref{Ramseyfin}), equilibrium interest rates and marginal utility interest rates are the same. Nevertheless, for  marginal utility price, this last curve is robust only for small trades.\\
The martingale property of $Y^*_t(y) B^{u}_t(T,y)$  yields to the following dynamics  for the zero coupon bond maturing at time $T$ with  volatility vector $\Gamma_t(T,y)$ %is the volatility vector of the zero-coupon bond $B^{u}(t,T)(y)$ maturing at  $T$,
\begin{equation}
\frac{dB^{u}_t(T,y)}{B^{u}_t(T,y)}= r_t dt+\Gamma_t(T,y).(dW_t+(\eta^\sigR_t-\nu^*_t(y))dt).
\end{equation}
Using the classical notation for exponential martingale, ${\cal E}_t(\theta)=\exp\b(\int_0^t \theta_s.dW_s-\demi \int_0^t\|\theta_s\|^2.ds\b)$,  the martingale $Y^*_t(y) B^{u}_t(T,y)$  can written as an exponential martingale with volatility $\b(\nu_.^*(y)-\eta^\sigR_.+\Gamma_.(T,y)\b)$. 
Using that $B^{u}_T(T,y)=1$, we have two characterisations of $Y^*_T(y),$
\begin{eqnarray*}
Y^*_T(y)= B^{u}_0(T, y){\cal E}_T\b(\nu_.^*(y)-\eta^\sigR_.+\Gamma_.( T,y)\b)= y\> e^{-\int_0^T r_s ds}
\cal{E}_T\b(\nu_.^*(y)-\eta^\sigR_.\b).
\end{eqnarray*}
Taking the logarithm gives
\begin{equation}\label{eq:intrsds}
\int_0^Tr_s ds=TR^u_0(T)-\int_0^T\Gamma_t(T,y).(dW_t+(\eta_t-\nu^*_t(y))dt)+\demi \|\Gamma_t(T,y)\|^2 dt .
\end{equation}
\subsection{Yield curve for infinite maturity and progressive utilities} 
The computation of the marginal utility  price of zero coupon bond is then straightforward using (\ref{linkYyieldcurve})
%then  for all $0 \leq t  < T \leq  T_H$
% \begin{eqnarray*}
% B^{u,H}(t,T)&=& \exp \left( \text{Cst}(T)-\text{Cst}(t) + \int_0^t (\Gamma_s(T)-\Gamma_s(t)   )dW_s \right)\\
% && \quad \exp \left(\frac{1}{2} \int_t^T (||\Gamma_s(T) ||^2 +  2 <\Gamma_s(T), \nu_s^{*,H}-\eta^\sigR_s   > )   ds
%  \right)\\
% &=  &
% B^{u}(0,T) \exp \left( \int_0^t r_s ds + \int_0^t  \Gamma_s(T) dW_s \right)\\  
% && \exp \left( -\frac{1}{2} \int_0^t (||\Gamma_s(T) ||^2 +  2 <\Gamma_s(T), \nu_s^{*,H}-\eta^\sigR_s   > )   ds
%  \right).
%  \end{eqnarray*}
leading to the yield curve dynamics $(R^{u}_t(T,y)= - \frac{1}{T-t} \ln B^{u}_t(T,y))$
\begin{eqnarray*}
R^{u}_t(T,y)& =&\frac{T}{T-t}
R^{u}_0(T,y)  -  \frac{1}{T-t} \int_0^t r_s ds - \int_0^t  \frac{\Gamma_s(T,y)}{T-t} dW_s  \\
 &&+ \int_0^t \frac{||\Gamma_s(T,y) ||^2}{2(T-t)} ds +  \int_0^t  <\frac{\Gamma_s(T,y)}{T-t}, \nu_s^{*}-\eta^\sigR_s   >    ds.
 \end{eqnarray*}
 for finite maturity, and $l^{u}_t(y):= \lim_{T \rightarrow + \infty}R^{u}_t(T,y) $ for infinite maturity.\\
As showed in Dybvig \cite{Dybvig} and  in El Karoui and alii. \cite{ElKarouiFrachot} the long maturity rate $l^{u}_t(y)$  behaves differently
according to the long term behavior of the volatility when $T\to \infty$,
\bit
\item[$-$] If $\lim_{T\to\infty} \frac{\Gamma_t(T,y)}{T-t}\neq 0, a.s.$, then 
$\frac{||\Gamma_t(T,y)||^2}{T-t}\to \infty$ a.s and $l_t(y)$ is infinite.
\item[$-$] Otherwise, $l_t=l_0+ \int_0^t \lim_{T \to \infty}  \left( \frac{||\Gamma_s(T,y)||^2}{2(T-s)}   \right) ds$, and $l_t$ is a non decreasing process, constant in $t$ and $\omega$ if $\lim_{T \to \infty}  \frac{||\Gamma_t(T,y)||^2}{T-t}=0$.
\eit
{\em In this last case, which is the situation considered by the economists, all past, present or future yield curves have the same asymptote}.
\section{ Progressive utilities and  yield curves in affine factor model}
Recently, affine factor models have been intensively developed with some success to capture under the physical probability
measure both financial and macroeconomics effects, from the seminal paper of Ang and Piazzesi (2003). As explained in
Bolder\&Liu (2007) \cite{Jamieson},
{\em Affine term-structure models have a number of theoretical and practical advantages. One of the principal advantages is
the explicit description of market participants aggregate attitude towards risk. This concept, captured by the market price
of risk in particular, provides a clean and intuitive way to understand deviations from the expectations hypothesis and
simultaneously ensure the absence of arbitrage.}

\subsection{Definition of affine market}
The affine factor model makes it possible to compute tractable pricing formulas, it extends the  log-normal model (studied in
\cite{MrNek04})  to a more stochastic model. 
Affine model,  which generalizes the CIR one, was  first introduced by D. Duffie and R. Kan (1996)
\cite{Duffie}, where the authors   assume that the yields are affine function of stochastic factors, which implies  an affine
structure of the factors.  
Among many others,  M.
Piazzesi reports   in  \cite{Monika} some recent successes in the study of affine term structure models. 
Several constraints must be fulfilled to define an affine model in mutidimentional framework, but we will not discuss the
details here and refer to the works of Teichmann and coauthors \cite{Teichmann2},  \cite{Teichmann1}.

\paragraph{Properties of affine processes and their exponential}
We adopt the framework of the example in 
Piazzesi (\cite{Monika}, p 704). 
The factor is a N-dimensionnal vector process denoted by   $\xi$ and  is  assumed to be  an affine diffusion process, that is
the drift
coefficient and the variance-covariance matrix are affine function of $\xi$ :
\begin{equation}\label{dynamicxi}
d\xi_t=\delta_t(\xi_t)dt+\sigma_t(\xi_t)dW_t
\end{equation} 
The affine constraint is expressed as:
\bit
\item[-] $\delta_t(\xi_t)=\varrho_t^\delta \xi_t+\delta_t^0$ , where $\varrho_t^\delta \in \R^{N\times N}$ and $\delta_t^0
\in
\R^{N}$ are  deterministic.
\item[-] $\sigma_t(\xi_t)=\Theta_t s_t(\xi_t)$, where $\Theta_t \in\R^{N\times N}$ is deterministic, and the matrix
  $s_t(\xi_t)$ is a diagonal $N\times N$ matrix, with eigenvalues $s_{ii,t}(\xi_t).$
 The affine property concerns the variance covariance matrix $\Theta_t s_t(\xi_t)\tilde s_t(\xi_t)\tilde \Theta_t$ or
equivalently (since  $s_t(\xi_t)$ is diagonal) the positive eigenvalues 
  $\lambda_{i,t}=s^2_{ii,t}(\xi_t)$ of $s_t(\xi_t)\tilde s_t(\xi_t)$:   $\lambda_{i,t}=\tilde{\rho}^{\lambda}_{i,t}\xi_t+
\lambda^0_{i,t}$ 
{ that must be positive \footnote{This constraint is restrictive,  but we do not go into the details of this
hypothesis as this is not the goal of this work, for
more details see the  literature cited above concerning affine models. }} with deterministic
$(\lambda^0_{i,t},\tilde{\rho}^{\lambda}_{i,t})\in \R\times\R^N$, where  $\, \tilde{.} \,$  denotes the transposition of a
vector or a matrix.
\eit

\paragraph{Characterization of market with affine optimal processes}
 To be coherent with the previous market model (Section \ref{descmarket}), we have to define the set of admissible strategies
$\sigR_t$ ,  at date
$t$, and its  orthogonal $\sigoR_t$. Let us first  to point out that the $(N\times1)$ volatility vector of any process
$\tilde{a}_t\xi_t$ ($a$ is
deterministic) is given by $s_t(\xi_t)\tilde{\Theta}_t.a_t$. Thus if $\sigR$ (linear space) is the
set
of admissible strategies, then at time $t$ it depends on $\xi_t$ and is  necessarily  given by, 
$$\sigR_t(\xi_t)=\{\tilde a_t  \Theta_ts _t(\xi_t), ~a_t \>\text{ progressive vector}\>\mbox{ in some linear and 
deterministic space } E_t\subset\R^N \}$$
The deterministic space $E_t$ and its orthogonal are assumed to be stable by  $\tilde \Theta_t$,  or equivalently  the matrix
  $\tilde
\Theta_t$ is commutative with the orthogonal projection on $E_t$.   A block matrix $\Theta_t$ (up to an orthogonal
transformation) satisfies this property. Furthermore, $E_t$ and $E^\perp_t$ are stable by  $\varrho^{\delta}_t$.
 The set of admissibles strategies being well defined, we denote by $a_t^\sigR$ and  by $a_t^\perp$ the elements of the
linear
spaces $E_t $ and $E^\perp_t$.\\
We consider two types of assumptions:\\
\rmi The spot rate $(r_t)$ and the consumption rate $(\zeta_t)$ are  affine positive processes \\
\centerline{$r_t= a_t^r \xi_t + b_t^r\quad $ and $\quad \zeta_t= a_t^\zeta \xi_t + b_t^\zeta$.}\\
\rmii The volatilities of the optimal processes  $X^*$ and $Y^*$ have  affine structure,
\begin{equation}\label{eqvolaffine}
\kappa^*_t=\tilde{a}^{X,\sigR}_t\Theta_t s_t(\xi), \quad     \eta^{\sigR}_t =\tilde{a}^{Y,\sigR}_t\Theta_t
s_t(\xi),
\quad \nu^*_t=\tilde{a}^{Y,\perp}_t\Theta_t s_t(\xi), (a^{.,\sigR} \in E_t \mbox{ and }  a^{.,\perp}\in E^\perp_t).
\end{equation}
%Both local martingales $ (e^{\int_0^t r_s ds} Y^* _t)$ and $ (e^{-\int_0^t \zeta_s ds} Y^*_tX^*_t)$ are  exponential functions
%of the processes
%$\tilde{a}^{Y}_t(\xi_t-\xi_0)+\int_0^t b^Y_s(\tilde{a}^{Y}_s,\xi_s)ds$ and
%$(\tilde{a}^{Y}_t+\tilde{a}^{X}_t)(\xi_t-\xi_0)+\int_0^t b^{XY}_s(\tilde{a}^{Y}_s+\tilde{a}^{X}_s,\xi_s)ds$, where
%$b^Y_s(\tilde{a},\xi)$ and $b^{XY}_s(\tilde{a},\xi) $ are quadratic affine functions in $\tilde{a}$ and affine in $\xi$ (see
%Lemma \ref{LemmaMarAff} below).

%==============================
\paragraph{Forward utility and marginal utility yields curve} 
\noindent In the forward utility framework with linear portfolios, the marginal utility price of zero-coupon bond with
maturity $T$ is given by Equation \eqref{linkYyieldcurve},  where $Y^*_t(y)$ is linear in $y$. The price of zero-coupon does
not depend on $y$. Taking into account  the specificities of the affine market, we have that
\begin{eqnarray}\label{affineYyieldcurve}
&&B^{u}_t(T)=\mathbb{E}\big[ \frac{Y^{*}_{T}}{Y^*_t} \b| \mathcal{F}_t  \big]\\
&&=\mathbb{E}\Big[\exp\big(-\int_t^T (a_s^r \xi_s + b_s^r) ds+\int_t^T\tilde{a}^{Y}_u\Theta_u s_u(\xi)dW_u-\demi \int_t^T
\|\tilde{a}^{Y}_u\Theta_u s_u(\xi)\|^2 du\big)\big| \mathcal{F}_t  \Big]\nonumber
\end{eqnarray}
Thanks to the Markovian structure of the affine diffusion $\xi$, the price of zero-coupon bond is an exponential affine
fonction of  
$\xi$, with the terminal constraint $B^{u}_T(T)=1$, \\[1mm]
\centerline{$\ln(B^{u}_t(T))=\tilde A_t^T \xi_t +B_t^T, \quad A_T^T=0, \>B_T^T=0$.}\\[1mm] 
 {\sc \bf Ricatti equations}
To justify the Ricatti equations, we fix some notations and develop useful calculation for study exponential affine process.
Moreover, when working with affine function $f(t, \xi)=\tilde a_t \xi +b_t$,  it is sometimes useful to write
$b_t=f_t(0)=f_t^0$ and $a_t=\nabla_\xi f_t(0)=\nabla f^0_t.$
\begin{Lemma}\label{LemmaMarAff}
 Let $a_t\in\R^N$ and $b_t\in \R$ be two deterministic functions. \\
 \rmi The affine process $\tilde{a}_t\xi_t+b_t$ is a semimartingale with
decomposition,
\begin{equation*}
d(\tilde{a}_t\xi_t+b_t)=\tilde{a}_t\Theta_ts_t(\xi)dW_t -\frac{1}{2}||\tilde{a_t}\Theta_t s_t(\xi)||^2 dt +f_t(a_t,\xi_t) dt
\end{equation*}
\rma The quadratic variation $||\tilde{a}_t\Theta_ts_t(\xi)||^2=q_t(a,\xi_t)$ is quadratic in $a$ and affine in $\xi$. More
precisely, if $\Theta^i_t$ is the $i$-th column of the matrix $\Theta$,  we have 
\begin{eqnarray}\label{driftaffine}
q_t(a,\xi)=\nabla q_t^{0}(a) \xi+q^0_t(a), \>  \text{with}\> \nabla
q_t^{0}(a)=\Sigma_{i=1}^{i=N}\big(\tilde{\Theta}^i_ta_t\big)^2\rho^{\lambda}_{i,t}, \> \text{and} \>
q_t^0(a)=\Sigma_{i=1}^{i=N}\big(\tilde{\Theta}^i_ta_t\big)^2{\lambda}^0_{i,t}\nonumber
\end{eqnarray}
\rmb The drift term $f_t(a, \xi_t)$ is an affine quadratic form in $a$ and affine in $ \xi$ given by
\begin{eqnarray}\label{driftaffine}
f_t(a,\xi_t) =(\partial_t \tilde{a}_t+\tilde{a}_t \varrho^{\delta}_t  +\frac{1}{2} \,\nabla q_t^{0}(a) \xi_t +\partial_t
b_t+\tilde{a}_t\delta^0_t+\frac{1}{2} \,q_t^{0}(a) \end{eqnarray}
\rmii A process $X_t=\tilde{a}_t\xi_t+b_t+\int_0^t \delta_s^X(\xi_s) ds$ ($\delta_t^X(\xi)=\nabla \delta_t^{0,X} \xi
+\delta^{0, X}_t$) is the log of an exponential martingale if and only if the coefficients satisfy the Ricatti equation
\begin{equation}\label{Ricatti equation}
\partial_t \tilde{a}_t+\tilde{a}_t \varrho^{\delta}_t  + \frac{1}{2} \,\nabla q_t^{0}(a) +\nabla \delta_t^{0,X}=0, \quad \partial_t
b_t+\tilde{a}_t\delta^0_t+\frac{1}{2} \,q_t^0(a)+\delta^{0,X}_t=0
\end{equation}
\end{Lemma}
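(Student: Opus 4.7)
The proof will be a direct application of It\^o's formula followed by identification of drifts. First, since $a_t$ and $b_t$ are deterministic, there are no It\^o correction terms: using the dynamics \eqref{dynamicxi} and the affine structure $\delta_t(\xi) = \varrho_t^\delta \xi + \delta_t^0$, I obtain
\begin{equation*}
d(\tilde a_t \xi_t + b_t) = \tilde a_t \Theta_t s_t(\xi)\, dW_t + \bigl(\partial_t \tilde a_t + \tilde a_t \varrho_t^\delta\bigr)\xi_t\, dt + \bigl(\tilde a_t \delta_t^0 + \partial_t b_t\bigr)\, dt.
\end{equation*}
The announced decomposition in (i) follows by adding and subtracting $\tfrac{1}{2}\|\tilde a_t \Theta_t s_t(\xi)\|^2\, dt$, the subtracted term playing the role of the standard exponential-martingale correction and the remainder collecting into $f_t(a_t,\xi_t)$.

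For (i)(a), the key observation is that $s_t$ is diagonal, so $s_t \tilde s_t = \mathrm{diag}(\lambda_{1,t},\ldots,\lambda_{N,t})$ and thus
\begin{equation*}
\|\tilde a_t \Theta_t s_t(\xi)\|^2 = \tilde a_t \Theta_t s_t \tilde s_t \tilde\Theta_t a_t = \sum_{i=1}^N (\tilde\Theta_t^i a_t)^2 \lambda_{i,t}.
\end{equation*}
Substituting the affine form $\lambda_{i,t} = \tilde{\rho}_{i,t}^\lambda \xi_t + \lambda_{i,t}^0$ immediately splits the right-hand side into the $\xi$-linear piece with coefficient $\nabla q_t^0(a) = \sum_i (\tilde\Theta_t^i a_t)^2 \tilde\rho_{i,t}^\lambda$ and the constant piece $q_t^0(a) = \sum_i (\tilde\Theta_t^i a_t)^2 \lambda_{i,t}^0$, both of which are quadratic in $a$. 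Part (i)(b) is then read off by collecting all the drift contributions: the linear-in-$\xi$ coefficient $\partial_t \tilde a_t + \tilde a_t \varrho_t^\delta + \tfrac{1}{2} \nabla q_t^0(a)$, and the constant term $\partial_t b_t + \tilde a_t \delta_t^0 + \tfrac{1}{2} q_t^0(a)$.

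For part (ii), write $X_t = \tilde a_t \xi_t + b_t + \int_0^t \delta_s^X(\xi_s)\, ds$ and apply (i). The martingale part of $dX_t$ is $\tilde a_t \Theta_t s_t(\xi)\, dW_t$, while the finite-variation part equals $-\tfrac{1}{2}\|\tilde a_t \Theta_t s_t(\xi)\|^2\, dt + \bigl(f_t(a_t,\xi_t) + \delta_t^X(\xi_t)\bigr)\, dt$. Hence $X_t$ is the logarithm of an exponential local martingale if and only if the residual drift $f_t(a_t,\xi_t) + \delta_t^X(\xi_t)$ vanishes identically in $t$ and $\xi$. Since this residual is an affine function of $\xi$, the linear coefficient and the constant term must vanish separately; this produces exactly the two Ricatti identities \eqref{Ricatti equation}.

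The only delicate point I anticipate is the final separation step: it implicitly requires that $\xi_t$ be sufficiently non-degenerate (its support spans enough directions in $\R^N$ as $t$ varies) so that the identity $f_t(a_t,\xi) + \delta_t^X(\xi) = 0$ forces the linear and constant parts to vanish independently. This is automatic under the standard non-degeneracy hypotheses on affine factor models recalled above and used throughout the section.
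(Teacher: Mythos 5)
Your proposal is correct and follows essentially the same route as the paper: It\^o's formula plus the affine drift structure for (i), the diagonal form $s_t\tilde s_t=\mathrm{diag}(\lambda_{i,t})$ for the quadratic variation in (a)--(b), and identification of the residual drift $f_t(a,\xi)+\delta^X_t(\xi)\equiv 0$ with separate vanishing of the linear and constant coefficients for (ii). Your closing remark on non-degeneracy of $\xi$ is a reasonable refinement of the ``only if'' direction that the paper leaves implicit, but it does not change the argument.
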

%==============================
\begin{proof}
\rmi From It\^o's formula,
$d(\tilde{a}_t\xi_t+b_t)=\partial_t (\tilde{a}_t)\xi_tdt +\tilde{a}_td\xi_t+\partial_tb_tdt$,\\
which implies from the dynamics (\ref{dynamicxi}) of $\xi$,
\begin{eqnarray*}
&&d(\tilde{a}_t\xi_t+b_t)=\big(\partial_t (\tilde{a}_t)\xi_t
+\tilde{a}_t\delta_t(\xi_t)+\partial_tb_t\big)dt+\tilde{a}_t\Theta_t
s_t(\xi)dW_t\\
&&=\Big(\partial_t (\tilde{a}_t)\xi_t
+\tilde{a}_t\delta_t(\xi_t)+\partial_tb_t+\frac{1}{2}||\tilde{a}_t\Theta_ts_t(\xi)||^2\Big)dt
+\Big(\tilde{a}_t\Theta_t
s_t(\xi)dW_t-\frac{1}{2}||\tilde{a}_t\Theta_ts_t(\xi)||^2dt\Big).
\end{eqnarray*}
\rma The sequel is based on the decomposition of the quadratic variation of $\tilde{a}_t\xi_t+b_t$ as affine form in $\xi_t$
and quadratic in $a$.\\
\rmb Then, the affine decomposition of the drift term may be rewritten in the same way.\\
\rmii $\exp(X_t)$ will be an exponential martingale, if and only if $dX_t=\tilde{a}_t\Theta_ts_t(\xi)dW_t
-\frac{1}{2}||\tilde{a_t}\Theta_t s_t(\xi)||^2 dt, $ 
or equivalently if and only if  $f_t(a,\xi) + \delta^X(t,\xi) \equiv 0$. Given that $f_t(a,\xi) + \delta^X(t,\xi) $ is affine
in $\xi$, the condition is satisfied if both, the coefficient of $\xi$ and the constant term are null functions.
\end{proof}
%======================================================
\noindent {\bf Application to  bond pricing} Solving Ricatti Equation \eqref{Ricatti equation} with any given  initial
condition $\tilde a_0 \xi_0+b_0$ implies that the exponential affine process $\exp(\tilde{a}_t\xi_t+b_t+\int_0^t
\delta^X_s(\xi_s) ds)$ is a local martingale. \\
 In a backward formulation,  solving Ricatti Equation \eqref{Ricatti equation} with   terminal constraint $X_T=\tilde{a}_T
\xi_T+b_T+\int_0^T\delta^X_s(\xi_s) ds$ implies under additional integrability assumption, that, if $(a^T_t, b_t^T) $ are
solutions of the Riccatti system with terminal condition $ (a_T, b_T) $.
\begin{equation}\label{condformula}\E\big[\exp(X_T-X_t)|\cal F_t\big]=\exp\big(\tilde{a}^T_t \xi_t+b_t^T \big).
\end{equation}
In the theory of bond pricing, from Equation \eqref{affineYyieldcurve}, we are looking for an  affine process $\tilde{A}^T_t
\xi_t+B^T_t$ such that, since $r_t=a_t^r \xi_t + b_t^r$,
$$\exp\big(\tilde{A}^T_t \xi_t+B^T_t \big)=\E\Big[
\exp
\big(\int_t^T -(a_t^r \xi_t + b_t^r )du+\tilde{a}^{Y}_u\Theta_u s_u(\xi)dW_u-\demi \|\tilde{a}^{Y}_u\Theta_u s_u(\xi)\|^2
du\big)
|\mathcal{F}_t \Big]$$
Given that the martingale part is the one associated with the affine process $\tilde{a}^{Y}_t \xi_t$, it is possible to
define an affine process $X^Y$  with affine drift such that, 
\begin{equation} 
X^Y_t=\tilde{a}^{Y}_t \xi_t-\int_0^t f_s(a^{Y}_s, \xi_s)\,ds=X^Y_0+\int_0^t\tilde{a}^{Y}_u\Theta_u s_u(\xi)dW_u-\demi
\|\tilde{a}^{Y}_u\Theta_u s_u(\xi)\|^2 du
\end{equation} 
where $f_t(a, \xi)$ is defined in Equation \eqref{driftaffine} with $b=0$. Since $\exp(X^Y_t)$ is a local martingale,
$\tilde{a}^{Y}_t$ is solution of the Ricatti equation with $b_t\equiv 0$.
In the new formulation, we are looking for some process $X^T_t=\tilde{A}^T_t \xi_t+B^T_t $ such that the exponential of the
process $Z_t= X^T_t+X^Y_t-\int_0^t\,r_sds$ is a local martingale.
The problem belongs to the family of previous problem applied to the process $Z$.
We summarize these results below.
%======================================
\begin{Theorem} Assume an affine optimization framework, where the optimal state price has an affine volatility
$\tilde{a}^{Y}_t\Theta_t s_t(\xi)$ with $\tilde{a}^{Y}_t$ solution of some Ricatti equation. \\
\rmi Any zero-coupon bond is an exponential function $\exp\big(\tilde{A}^T_t \xi_t+B^T_t \big)$ such that $\tilde
a^Z_t=\tilde{A}^T_t+\tilde{a}^{Y}_t$ is solution of a Ricatti function with terminal condition $\tilde{a}^{Y}_T$, and
$\delta^Z $ function 
$\delta^Z_t(\xi) =-f_t(a^{Y}_t \xi)+a_t^r \xi_t + b_t^r $.\\
\rmii The volatility of a bond with maturity $T$ is $\Gamma_t(T)=\tilde{A}^T_t \Theta_t s_t(\xi)$.
\end{Theorem}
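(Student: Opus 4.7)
The plan is to start from the bond price representation \eqref{affineYyieldcurve}, which expresses $B^u_t(T)$ as a conditional expectation of an exponential whose exponent is linear in a $dW$-integral plus an affine-in-$\xi$ drift. The affine Markov structure of $\xi$ motivates the ansatz $\ln B^u_t(T) = \tilde A_t^T \xi_t + B_t^T$ with terminal conditions $A_T^T = 0$ and $B_T^T = 0$; the task is then to characterise $(A^T,B^T)$ as the solution of a Ricatti system and to read off the diffusion coefficient by It\^o's formula.

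To carry this out, I would first invoke the preparatory computation in the excerpt: the exponent inside the conditional expectation in \eqref{affineYyieldcurve} can be rewritten as $X_T^Y - X_t^Y - \int_t^T r_s\,ds$, where $X_t^Y = \tilde a_t^Y \xi_t - \int_0^t f_s(a_s^Y,\xi_s)\,ds$ is built so that $\exp(X^Y)$ is a local martingale (Lemma \ref{LemmaMarAff}(i)--(ii) applied with $b\equiv 0$). Multiplying the candidate $\exp(\tilde A_t^T \xi_t + B_t^T)$ by $Y^*_t/y$ yields an expression of the form $\exp(Z_t)$ with
\begin{equation*}
Z_t \;=\; (\tilde A_t^T + \tilde a_t^Y)\,\xi_t + B_t^T - X_0^Y + \int_0^t \delta_s^Z(\xi_s)\,ds, \qquad \delta_s^Z(\xi)=-f_s(a_s^Y,\xi)-r_s,
\end{equation*}
an affine-plus-finite-variation process that fits exactly the framework of Lemma \ref{LemmaMarAff}(ii). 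Setting $\tilde a_t^Z := \tilde A_t^T + \tilde a_t^Y$ and imposing the local martingale property on $\exp(Z_t)$ produces the Ricatti system \eqref{Ricatti equation} for the pair $(\tilde a^Z, B^T)$; the ansatz constraint $A_T^T = 0$ translates into the terminal condition $\tilde a_T^Z = \tilde a_T^Y$, while $B_T^T = 0$ closes the scalar equation. Under the integrability hypothesis that precedes \eqref{condformula}, the local martingale is a true martingale, and formula \eqref{condformula} identifies the conditional expectation with $\exp(\tilde A_t^T \xi_t + B_t^T)$, which is statement (i).

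For (ii) it then suffices to apply It\^o's formula to $B^u_t(T) = \exp(\tilde A_t^T\xi_t + B_t^T)$, using the dynamics \eqref{dynamicxi} of $\xi$: the diffusion part of $d\ln B^u_t(T)$ is $\tilde A_t^T\Theta_t s_t(\xi_t)\,dW_t$, so the bond volatility is $\Gamma_t(T)=\tilde A_t^T\Theta_t s_t(\xi_t)$, as announced. The two delicate points in the argument are (a) verifying that the exponential local martingale constructed above is a true martingale --- this is the classical integrability issue for exponential affine functionals and matches the ``additional integrability assumption'' invoked in \eqref{condformula}; and (b) converting the forward Ricatti system of Lemma \ref{LemmaMarAff} into one with terminal data at $T$, which requires non-explosion of the Ricatti solution on $[0,T]$, a standard and well-understood question in the affine term-structure literature cited in the paper.
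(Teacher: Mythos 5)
Your proof is correct and follows essentially the same route as the paper: both form the process $Z_t = X^T_t + X^Y_t - \int_0^t r_s\,ds$ and apply Lemma \ref{LemmaMarAff}(ii) to obtain the Ricatti system with terminal condition $\tilde a^Y_T$, while you additionally spell out the It\^o computation for the volatility in (ii) and flag the integrability and non-explosion caveats that the paper leaves implicit. Note that your sign $\delta^Z_s(\xi) = -f_s(a^Y_s,\xi) - r_s$ is the one actually consistent with this decomposition; the ``$+a^r_t\xi_t + b^r_t$'' in the paper's statement and proof appears to be a sign typo.
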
 
\begin{proof} The process $Z_t= X^T_t+X^Y_t-\int_0^tr_sds$ is an affine process with affine integral term. The decomposition
of $Z$ is of the type $Z_t= \tilde{a}^Z_t\xi_t+b^Z_t+\int_0^t  \delta^Z_s(\xi_s) ds$ with
$\tilde{a}^Z_t=\tilde{A}^T_t+\tilde{a}^{Y}_t$, $b^Z_t=B^T_t$, $\delta^Z_t(\xi) =-f_t(a^{Y}_t, \xi)+a_t^r \xi_t + b_t^r $. So
$\tilde{a}^Z_t$ satisfies a Ricatti equation with terminal value $\tilde{a}^{Y}_T.$
\end{proof}

\subsection{Affine model and   power utilities}\label{sec:expuissance}
As power utilities is the classical most important example for economics, we now study the marginal utility yield curve in
affine model with progressive and backward power utilities. The backward case differs significantly of the forward case,
since constraints appear on the optimal processes at maturity $T_H$.\\[-8mm]
\paragraph{Backward formulation}
In the classical backward problem with classical power utility function $x^\theta$, ($0<\theta <1$) and horizon $T_H$,  we
have the terminal constraint on the optimal processes: the terminal values of the   optimal
wealth process $X^{*,H}_{T_H}$ and of the state  price  process  $Y^{*,H}_{T_H}$ satisfy   (from optimality) \\
\centerline{ $(Y^{*,H}_{T_H})^{1/\theta} X^{*,H}_{T_H}=\text{Cst}=\exp k$ }\\
We recall here by the index $H$ the dependency on the horizon $T_H$; moreover for simplicity we make $k=0$. 

The question is then how this constraint is propagated at any time in an affine framework, with affine consumption rate $\zeta_t^*$. For notational simplicity, we denote $X^{*,\zeta,H}$ 
(the optimal wealth process capitalized at rate $\zeta_t^*$ by the process $S^{0,\zeta^*}_t=\exp\int_0^t\zeta_u^*du.$ \\
Using that $S^{0,\zeta^*} X^{*,H}Y^{*,H}$ is a martingale with terminal value  $S^{0,\zeta^*}_{T_H} (Y^{*,H}_{T_H})^{1-1/\theta} $, we study the martingale 
$M^\theta_t=\E\big(S^{0,\zeta^*}_{T_H} (Y^{*,H}_{T_H})^{1-1/\theta}|\cal F_t\big )$ in two ways:\\
 \rmi the first one is based on the process $X_t^{*,H}$, since $M^\theta_t=S^{0,\zeta^*}_tX_t^{*,H}Y_t^{*,H}$\\
 \rmii the second one is very similar to the study of zero-coupon bond, by observing that by the Markov property
$ \E\big[(S^{0,\zeta^*}_{T_H}/S^{0,\zeta^*}_t)(Y^{*,H}_{T_H}/Y^{*,H}_t)^{1-1/\theta}|\cal F_t\big ]$ is an exponential affine, whose  coefficients
$(A^\theta,B^\theta)$ are solutions of a Ricatti equation,   that is
$$\E\big((S^{0,\zeta^*}_{T_H}/S^{0,\zeta^*}_t)(Y^{*,H}_{T_H}/Y^{*,H}_t)^{1-1/\theta}|\cal F_t\big )=\exp({\tilde A}_t^\theta\xi_t+B_t^\theta)$$
The backward constraint is then equivalent to the stochastic equality \\[1mm]
\centerline{$S^{0,\zeta^*}_tX_t^{*,H}Y_t^{*,H}=S^{0,\zeta^*}_t(Y^{*,H}_t)^{1-1/\theta}\exp({\tilde A}_t^\theta\xi_t+B_t^\theta)$.}
\begin{Proposition}
\rmi The terminal optimal constraint $X^{*,H}_{T_H}=S^{0,\zeta^*}_{T_H}(Y^{*,H}_{T_H})^{-1/\theta}$ is propagated  through the time into a
closed relation with the state price process,
\begin{equation}
 X_t^{*,H}=(Y^{*,H}_t)^{-1/\theta}\exp({\tilde A}_t^\theta \xi_t+B_t^\theta),
\end{equation}
where $\exp({\tilde A}_t^\theta\xi_t+B_t^\theta)=\E\big[(S^{0,\zeta^*}_{T_H}/S^{0,\zeta^*}_t)(Y^{*,H}_{T_H}/Y^{*,H}_t)^{1-1/\theta}|\cal F_t\big ]$.\\
\rmii In particular, the constraint that the volatility $\kappa^*_t$ has an affine structure belonging to the space $E$,
implies that $({\tilde A}_t^\theta)^\perp=1/\theta(\tilde a^Y_t)^\perp$.\\
\rmiii The links with the zero-coupon bond is given by the relation, 
\begin{equation}
B(t,T)=\exp({\tilde A}^T_t\xi_t+B_t^T)=\E\big(\frac{Y^{*,H}_{T}}{Y^{*,H}_t}|\cal F_t\big ).
\end{equation}
\end{Proposition}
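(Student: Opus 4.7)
The plan is to follow the two-way analysis sketched just before the statement: evaluate the martingale $M^\theta_t = \mathbb{E}\bigl[S^{0,\zeta^*}_{T_H}(Y^{*,H}_{T_H})^{1-1/\theta}\mid\mathcal{F}_t\bigr]$ in two different ways and then equate the expressions. For part~(i), the first-order condition of Proposition~\ref{Duale} applied to the backward power utility $u(x)=x^{1-\theta}/(1-\theta)$ gives the terminal identity $X^{*,H}_{T_H}=(Y^{*,H}_{T_H})^{-1/\theta}$ (with the normalisation $k=0$ stipulated in the paragraph above). Since $C^{*,H}_s=\zeta^*_s X^{*,H}_s$, the local-martingale property of $Y^{*,H}X^{*,H}+\int_0^\cdot Y^{*,H}_sC^{*,H}_s\,ds$ from Definition~\ref{SPDP}, combined with the elementary integration by parts $d(S^{0,\zeta^*}_tY^{*,H}_tX^{*,H}_t)=S^{0,\zeta^*}_t\bigl(d(Y^{*,H}_tX^{*,H}_t)+\zeta^*_tY^{*,H}_tX^{*,H}_t\,dt\bigr)$, shows that $S^{0,\zeta^*}Y^{*,H}X^{*,H}$ is a (local) martingale with terminal value $S^{0,\zeta^*}_{T_H}(Y^{*,H}_{T_H})^{1-1/\theta}$; therefore $M^\theta_t=S^{0,\zeta^*}_tY^{*,H}_tX^{*,H}_t$. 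On the other hand, pulling the $\mathcal{F}_t$-measurable factors out of the conditional expectation reduces the evaluation of $M^\theta_t$ to that of $\mathbb{E}\bigl[(S^{0,\zeta^*}_{T_H}/S^{0,\zeta^*}_t)(Y^{*,H}_{T_H}/Y^{*,H}_t)^{1-1/\theta}\mid\mathcal{F}_t\bigr]$, which in the affine framework is the conditional expectation of the exponential of an affine functional of the trajectory of $\xi$. By Lemma~\ref{LemmaMarAff} applied to the affine process $Z=\ln S^{0,\zeta^*}+(1-1/\theta)\ln Y^{*,H}$, this expectation equals $\exp(\tilde A^\theta_t\xi_t+B^\theta_t)$, where $(A^\theta,B^\theta)$ solve a Riccati system with terminal value $0$. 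Equating the two expressions and cancelling $S^{0,\zeta^*}_tY^{*,H}_t$ produces the announced closed formula.

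For part~(ii), I apply It\^o's formula to the product $X^{*,H}_t=(Y^{*,H}_t)^{-1/\theta}\exp(\tilde A^\theta_t\xi_t+B^\theta_t)$ and identify the volatility vector. Using the affine decompositions $\eta^\sigR_t=\tilde a^{Y,\sigR}_t\Theta_ts_t(\xi)$ and $\nu^*_t=\tilde a^{Y,\perp}_t\Theta_ts_t(\xi)$, the volatility of $(Y^{*,H}_t)^{-1/\theta}$ is $-\tfrac{1}{\theta}\tilde a^Y_t\Theta_ts_t(\xi)$, and adding the volatility $\tilde A^\theta_t\Theta_ts_t(\xi)$ of the exponential factor yields $\kappa^*_t=\bigl(\tilde A^\theta_t-\tfrac1\theta\tilde a^Y_t\bigr)\Theta_ts_t(\xi)$. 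The admissibility constraint $\kappa^*_t\in\sigR_t$ forces the $\sigoR_t$-component to vanish; by the stability of $E^\perp_t$ under $\tilde\Theta_t$ assumed on the affine market, this amounts exactly to $(\tilde A^\theta_t)^\perp-\tfrac{1}{\theta}(\tilde a^Y_t)^\perp=0$, which is the stated identity.

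For part~(iii), the identity $B(t,T)=\mathbb{E}[Y^{*,H}_T/Y^{*,H}_t\mid\mathcal{F}_t]$ is just Equation~\eqref{linkYyieldcurve} specialised to the backward optimal state price process, and the exponential-affine representation then follows from the Riccati argument summarised in the theorem preceding the statement, with terminal conditions $A^T_T=0$, $B^T_T=0$. I expect the main obstacle to lie not in the algebra but in the passage from local to true martingale for $S^{0,\zeta^*}Y^{*,H}X^{*,H}$, together with the verification that the two exponential-affine conditional expectations appearing in~(i) and~(iii) are indeed finite; both points ultimately reduce to standard but delicate exponential-moment estimates on the affine diffusion $\xi$, equivalent to the global solvability of the corresponding Riccati systems on $[0,T_H]$ (resp.\ $[0,T]$).
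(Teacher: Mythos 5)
Your proposal is correct and follows essentially the same route as the paper, which proves the proposition implicitly in the paragraph preceding it: the two evaluations of the martingale $M^\theta_t=\mathbb{E}\bigl[S^{0,\zeta^*}_{T_H}(Y^{*,H}_{T_H})^{1-1/\theta}\mid\mathcal{F}_t\bigr]$ (once as $S^{0,\zeta^*}_tX^{*,H}_tY^{*,H}_t$, once as an exponential-affine conditional expectation via the Riccati machinery of the lemma on affine semimartingales) are exactly the paper's argument for part (i), and parts (ii) and (iii) follow as you describe. Your additional remarks on the local-vs-true martingale issue and on the finiteness of the exponential-affine expectations address points the paper passes over silently, and your volatility identification for part (ii) supplies a detail the paper only asserts.
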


\section{Yield curve dynamics non-linear  on initial conditions  }\label{sect:mixture}

\subsection{From linear optimal processes to more general progressive utilities }
Until then, we have omitted the dependence of optimal processes with respect to risk aversion. In what follows, risk aversion
plays an important role, therefore, an agent that has as risk aversion denoted $ \theta $, his utility process will be
denoted $ U^\theta $, his optimal wealth is denoted $ X ^ {* \theta} $ and finally  his  optimal dual process $Y^{*,
\theta}$. For simplicity we are concerned  in the following only by utility processes which are of power type. 
As we have already mentioned above, utilities of power type  generate optimal processes $ X ^{*,\theta} $ and $ Y^{*,\theta}
$ which are linear with respect to their initial conditions, i.e, $ X^ {*,\theta} (x) = xX^{*,\theta} $ and $
Y^{*,\theta}(y) = yY^{*,\theta} $ (with $ X ^{*,\theta} = X ^ {*,\theta} (1) $ and $ Y ^ {*,\theta} = Y ^ {* ,\theta} (1))
$. Thus the marginal utility  price at time $t$ of a zero coupon with maturity $T$,
given  in \eqref{linkYyieldcurve}  by
\begin{equation*}
B^{\theta}(t,T)(y):=B^{u}(t,T)(y)=   \mathbb{E}\B[ \frac{Y^{*,\theta}_{T}(y)}{Y^{*,\theta}_t(y)} \b| \mathcal{F}_t  \B]=
\mathbb{E}\B[ Y^{*,\theta}_{t,T}\b| \mathcal{F}_t  \B],~\text{ with } Y^{*,\theta}_{t,t}=1
\end{equation*}
  does not depend on  $ y $  nor  on the  consumption of the market  at  time $ t $.
This is not surprising given that power utilities, although they are   useful to compute  explicit optimal strategies,  are somehow restrictive because they  generates only  
linear optimal processes. Besides, the  economic litterature  emphasizes the  dependence of the equilibrium rate $R_0^e(T)$
on the initial consumption.
%  see \cite{Gollier3,Gollier6,Gollier15,HGlongterm}.\\
To study this dependence, we have to give a nontrivial example of stochastic utility that generates a  nonlinear state price
density process and then calculate the price of zero coupon.
This is not obvious, especially since our goal is to give an explicit formula for the optimal dual process. The idea is
to first generate, from optimal process $ X ^ {*,\theta} $ and $ Y ^ {*,\theta} $ associated with progressive power utilities
$U^\theta$, a new processes $ \bar{X} $ and $ \bar{Y} $ which are  both admissible, monotone and
especially nonlinear with respect to their initial conditions. In a second step,   we use the characterization
\eqref{charac} of Theorem \ref{explicite construction} to thereby construct
non-trivial stochastic utilities with $ \bar{X} $ and $ \bar{Y} $ as optimal processes.
The method that we will develop in the following is the starting point of the work of El Karoui and
Mrad \cite{MrNek03}, in which  many  other ideas and extensions can then be found.\\

\noindent
\underline{{\bf step 1:}} To fix the idea, $(X ^ {*,\theta}, Y^ {*,\theta}, \zeta^{*,\theta} )$ denotes the triplet of
optimal primal, dual  and consumption processes associated with stochastic utility $(U^\theta,V^\theta)$ of power type and
relative risk aversion $U_x^\theta/xU_{xx}^\theta=\theta$. 
We consider also two strictly decreasing probability density functions $f$ and $g$,
$\int_{0}^{+\infty}f(\theta')d\theta'=\int_{0}^{+\infty}g(\theta')d\theta'=1$, use the change of variable $
\theta=z\theta'$ and define the strictly increasing functions  $x^\theta(x):=f(\theta/x),~x>0$,
$y^\theta(y):=g(\theta/y),~y>0$ satisfying the following identities 
$$\int_{0}^{+\infty}x^\theta(x)d\theta=x,~\forall x>0$$
$$\int_{0}^{+\infty}y^\theta(y)d\theta=y,~\forall y>0$$

\noindent
 \underline{{\bf step 2:}} We  are now concerned with the following processes $\bar{X}$, $\bar{Y}$ and $\bar{\zeta}$ defined by
\begin{eqnarray}
 &\bar{X}_t(x):=\int_{0}^{+\infty}x^\theta(x)X^{*,\theta}_td\theta,~ x>0 \label{BarXMixture}\\
& \bar{Y}_t(y):=\int_{0}^{+\infty}y^\theta(y)Y^{*,\theta}_td\theta,~ y>0 \label{BarYMixture}
\end{eqnarray}
where we recall that $X^{*,\theta}_t$ (resp. $Y^{*,\theta}_t$) denotes, in this integral, the optimal process starting
from the initial condition equal to $1$.
As seen previously, these two processes are an admissible wealth and a state density process which are
strictly increasing with respect to their initial conditions of which they depend on non-trivial way far from being linear.
The consumption $\bar{\zeta}$ intuitively associated with $\bar{X}$  is given by 
\begin{equation}\label{GlobalConso}
\bar{\zeta}_t(\bar{X}_t(x))=\int_{0}^{+\infty}\zeta^{*,\theta}_t(X^{*,\theta}_t(x^\theta(x)))d\theta=\int_{0}^{+\infty}  x^\theta(x) \zeta^{*,\theta}_t(X^{*,\theta}_t)d\theta 
\end{equation}
where the last equality comes from the linearity, for a fixed $\theta$,   of $\zeta^{*,\theta}$ and $X^{*,\theta}$.
To complete the construction of the progressive utility for which $(\bar{X},\bar{Y},\bar{\zeta})$ will be the  optimal processes, a martingale property on the process  $(  e^{-\int_0^t \bar \zeta_s ds   } \bar{X_t}\bar{Y_t}   )$ is
necessary. We, then, make the following assumption 
\begin{Assumption}\label{hypXYmgle}
 The optimal policies $\kappa^{*,\theta}$ and $\nu^{*,\theta}$ and the market risk premium $\eta$ are uniformly bounded. 
\end{Assumption}

\noindent
Assumption \ref{hypXYmgle} implies that $ ( e^{-\int_0^. \zeta_s^{*, \theta} ds   } X^ {*, \theta} Y^{*, \theta'}) $  are martingales for all ${\theta, \theta'}$ and consequently $(  e^{-\int_0^t \bar \zeta_s ds   } \bar{X_t}\bar{Y_t}   )$ is also  a martingale. \\

\noindent
\underline{{\bf step 3:}} The last step is then to consider any classical utility functions $u$ and $v$ (not necessarily of power type
nor generating linear  optimal processes) and only impose that  their derivatives $ u_x$ and $v_c$ 
have good integrability conditions close to zero. All the ingredients were met, from \eqref{charac} of Theorem \ref{explicite
construction}, by considering the monotonic process $\bar{C}$ defined by
$$\bar{C}(v_c^{-1}(u_x(x))):=\bar{\zeta}_t(\bar{X}_t(x))=\int_{0}^{+\infty}\zeta^{*,\theta}_t(X^{*,\theta}
_t(x^\theta(x)))d\theta,$$
the pair of random fields defined by
\begin{eqnarray}\label{UVcaracterisation}
\left\{
 \begin{array}{cll}
&U(t,x)=\int_0^x\bar{Y}_t(u_x(\bar{\X}(t,z))dz, \\
& V(t,c)=\int_0^c\bar{Y}_t(v_c(\bar{\C}_t(\theta))d\theta \\
%&\text{\rm with} \>\zeta^{*,-1}(0,c)=u_x^{-1}(v_c(c))
 \end{array}
 \right .
\end{eqnarray}
is a consistent progressive utility of investment and consumption generating $(\bar{X},\bar{Y},\bar{C})$ as optimal wealth,
dual and consumption processes, with dual $(\tU,\tV)$:
\begin{eqnarray}\label{UVdualcaracterisation}
\left\{
 \begin{array}{cll}
&\tU(t,y)=\int_y^\infty \bar{X}_t(-\tu_y(\bar{\Y}(t,z))dz, \\
& \tV(t,c)=\int_c^\infty \bar{C}_t(-\tv_c(\bar{\Y}(t,\alpha))d\alpha \\
%&\text{\rm with}\>\zeta^{*}(0,c)=-\tv_c(u_x(c))
 \end{array}
 \right .
\end{eqnarray}
where $(\bar{\X},\bar{\Y},\bar{\C})$ denotes the inverse flows of $\bar{X}$, $\bar{Y}$ and $\bar{C}$.\\
{\bf  Example } Suppose that for any $\theta,\theta'$ we have $X^{*,\theta}=X^{*,\theta'}=X^{*},\forall \theta$ a.s., then in
this case $\bar{X}(x)=xX^{*}$ with inverse $\bar{\X}(x)=x/X^*$, consequently the progressive utility $U$ is given
by:
$$U(t,x)=\int_0^x\int_{0}^{+\infty}y^\theta(u_x(x/X^*_t))Y^{*,\theta}_td\theta dz.$$
%{\bf Financial interpretation:}

\subsection{Application to Ramsey rule evaluation}
Let us now, turn to the Ramsey rule, and study the price of zero coupon. We recall at first that the price in our new
framework  is then given by 
\begin{equation}
 B(t,T)(y)=\mathbb{E}\B[ \frac{\bar{Y}_{T}(y)}{\bar{Y}_t(y)} \b| \mathcal{F}_t  \B]
\end{equation}
From the formula of  $\bar{Y}$ \eqref{BarYMixture}, the price $B(t,T)$ becomes
\begin{equation}
 B(t,T)(y)=\frac{1}{\int_{0}^{+\infty}y^\theta(y)Y^{*,\theta}_td\theta}\mathbb{E}\B[
\int_{0}^{+\infty}y^\theta(y)Y^{*,\theta}_Td\theta \b| \mathcal{F}_t  \B]
\end{equation}
Now,  let us introduce $B^\theta(t,T)$ the zero coupon bond (independent on $y$ because $Y^{*,\theta}$ is linear on $y$)
associated with risk aversion $\theta$ defined by
\begin{equation}
 B^\theta(t,T)=\mathbb{E}\B[ \frac{Y^{*,\theta}_{T}(y)}{Y^{*,\theta}_t(y)} \b| \mathcal{F}_t  \B]
\end{equation}
it follows that 

\begin{equation}
 B(t,T)(y)=\frac{1}{\int_{0}^{+\infty}y^\theta(y)Y^{*,\theta}_td\theta}
\int_{0}^{+\infty}y^\theta(y)Y^{*,\theta}_tB^\theta(t,T)d\theta 
\end{equation}
It is clear from this formula, that the zero coupon  is a mixture  of prices $B^\theta(t,T)$ weighted by $
\frac{y^\theta(y)Y^{*,\theta}_t}{\int_{0}^{+\infty}y^\theta(y)Y^{*,\theta}_td\theta}
$ which is strongly dependent on $ y $ of non-trivial way.\\[1mm]
At this level, several questions naturally arise: What is the sensitivity of the bond with respect to y? It is monotone,
concave, convex? What about its asymptotic behavior?
Give complete and satisfactory answers to these questions is  beyond the scope of this work but will be addressed in a future
paper.
\bibliographystyle{plain}
\bibliography{UtilityConsumption}

\begin{thebibliography}{10}

\bibitem{Jamieson}
David~Jamieson Bolder and Shudan Liu.
\newblock Examining simple joint macroeconomic and term-structure models: A
  practitioner's perspective.
\newblock 2007.

\bibitem{Davis}
Mark~H.A. Davis.
\newblock Option pricing in incomplete markets.
\newblock In S.R. Pliska, editor, {\em Mathematics of Derivative Securities},
  pages 216--226. M.A.H. Dempster and S.R. Pliska, cambridge university press
  edition, 1998.

\bibitem{Duffie}
D.Duffie and L.G.Epstein.
\newblock Stochastic differential utility.
\newblock {\em Econometrica}, 60(2):353--394, 1992.
\newblock With an appendix by the authors and C. Skiadas.

\bibitem{Gollier3}
Christian Gollier.
\newblock What is the socially efficient level of the long-term discount rate?

\bibitem{Gollier16}
Christian Gollier.
\newblock An evaluation of sten's report on the economics of climate change.
\newblock Technical Report 464, IDEI Working Paper, 2006.

\bibitem{Gollier13}
Christian Gollier.
\newblock Comment int{\'e}grer le risque dans le calcul {\'e}conomique?
\newblock {\em Revue d'{\'e}conomie politique}, 117(2):209--223, 2007.

\bibitem{Gollier6}
Christian Gollier.
\newblock The consumption-based determinants of the term structure of discount
  rates.
\newblock {\em Mathematics and Financial Economics}, 1(2):81--101, July 2007.

\bibitem{Gollier14}
Christian Gollier.
\newblock Managing long-term risks.
\newblock 2008.

\bibitem{GollierEcological}
Christian Gollier.
\newblock Ecological discounting.
\newblock IDEI Working Papers 524, Institut d'{\'E}conomie Industrielle (IDEI),
  Toulouse, July 2009.

\bibitem{Gollier15}
Christian Gollier.
\newblock Expected net present value, expected net future value and the ramsey
  rule.
\newblock Technical Report 557, IDEI Working Paper, June 2009.

\bibitem{Gollier9}
Christian Gollier.
\newblock Should we discount the far-distant future at its lowest possible
  rate?
\newblock {\em Economics: the Open Access, Open-Assessment E-Journal},
  3(2009-25), June 2009.

\bibitem{KaratzasShreve:01}
I.Karatzas and S.E.Shreve.
\newblock {\em Methods of Mathematical Finance}.
\newblock Springer, September 2001.

\bibitem{MrNek03}
N.El Karoui and M.~Mrad.
\newblock Mixture of consistent stochastic utilities, and a priori randomness.
\newblock {\em preprint.}, 2010.

\bibitem{MrNek02}
N.El Karoui and M.~Mrad.
\newblock Stochastic utilities with a given optimal portfolio : approach by
  stochastic flows.
\newblock {\em Preprint.}, 2010.

\bibitem{MrNek01}
N.El Karoui and M.~Mrad.
\newblock An exact connection between two solvable sdes and a non linear
  utility stochastic pdes.
\newblock {\em SIAM Journal on Financial Mathematics,}, 4(1):697--736, 2013.

\bibitem{MrNek04}
N.El Karoui, M.~Mrad, and C.~Hillairet.
\newblock Ramsey rule with progressive utility \\ in long term yield curves
  modeling.
\newblock {\em preprint.}, 2014.

\bibitem{Teichmann2}
M.~Keller-Ressel, W.~Schachermayer, and J.~Teichmann.
\newblock Affine processes are regular.
\newblock {\em Probability Theory and Related Fields}, (151):591--611, 2011.

\bibitem{Teichmann1}
M.~Keller-Ressel, W.~Schachermayer, and J.~Teichmann.
\newblock Regularity of affine processes on general state spaces.
\newblock {\em Electronic Journal of Probability}, 43(18):1--17, 2013.

\bibitem{zar-07}
M.Musiela and T.Zariphopoulou.
\newblock Stochastic partial differential equations in portfolio choice.
\newblock {\em Preliminary report}, 2007.

\bibitem{zar-08}
M.Musiela and T.Zariphopoulou.
\newblock Portfolio choice under dynamic investment performance criteria.
\newblock {\em Quantitative Finance}, 9(2):161--170, 2009.

\bibitem{zar-03}
M.~Musiela and T.~Zariphopoulou.
\newblock Backward and forward utilities and the associated pricing systems:
  The case study of the binomial model.
\newblock pages 3--44. Princeton University Press, 2005-2009.

\bibitem{zar-07a}
M.~Musiela and T.~Zariphopoulou.
\newblock Investment and valuation under backward and forward dynamic
  exponential utilities in a stochastic factor model.
\newblock In {\em Advances in mathematical finance}, pages 303--334.
  Birkh\"auser Boston, 2007.

\bibitem{ElKarouiFrachot}
Antoine~Frachot Nicole~ElKaroui and Helyette Geman.
\newblock On the behavior of long zero coupon rates in a no arbitrage
  framework.
\newblock {\em Review of derivatives research}, 1:351--369, 1997.

\bibitem{Dybvig}
J.E.~Ingersol P.H.~Dybvig and S.A. Ross.
\newblock Longfforward and zero-coupon rates can never fall.
\newblock {\em Journal of Business}, 69:1--25, 1996.

\bibitem{Monika}
Monika Piazzesi.
\newblock {\em Handbook of Financial Econometrics}, chapter Affine Term
  Structure Models, pages 691--766.
\newblock 2010.

\bibitem{Ramsey}
F.P. Ramsey.
\newblock A mathematical theory of savings.
\newblock {\em The Economic Journal}, (38):543--559, 1928.

\bibitem{Rogers}
L.C.G. Rogers.
\newblock A mathematical theory of savingsduality in constrained optimal
  investment and consumption problems: A synthesis.
\newblock {\em Working paper, Statistical Laboratory, Cambridge
  University.<http://www.statslab.cam.ac.uk/~chris/>}, 2003.

\bibitem{Weitzman}
Martin~L. Weitzman.
\newblock Why the far-distant future should be discounted at its lowest
  possible rate.
\newblock {\em Journal of Environmental Economics and Management},
  36(3):201--208, November 1998.

\bibitem{Weitzman_review}
Martin~L. Weitzman.
\newblock A review of the the stern review on the economics of climate change.
\newblock {\em Journal of Economic Litterature}, 45:703--724, September 2007.

\end{thebibliography}

\end{document}